\documentclass[11pt]{article}
\pdfoutput=1
\usepackage[margin=1in]{geometry}
\usepackage{amssymb}
\usepackage{graphicx}
\usepackage{textcomp}
\usepackage{enumerate}
\usepackage{amsfonts,amsmath}
\usepackage{url}
\usepackage[pagewise]{lineno}
\usepackage{authblk}
\usepackage{tikz}
\usepackage{amsthm}
\usepackage{yhmath}
\usepackage{wrapfig}
\usepackage[pdfpagelabels,plainpages=false,hypertexnames=false]{hyperref}%
\hypersetup{%
   breaklinks,%
   %ocgcolorlinks,
   colorlinks=true,%
   linkcolor=[rgb]{0.45,0.0,0.0},%
   urlcolor=[rgb]{0.05,0.390,0.0},%
   citecolor=[rgb]{0,0,0.45}
}%

\newtheorem{lemma}{Lemma}[section]

\newtheorem{obs}{Observation}[section]
\newtheorem{cor}{Corollary}
\newtheorem{theorem}{Theorem}

\title{Voronoi Game on Graphs\thanks{A preliminary version of this article appeared in WALCOM-2013 \cite{BandyapadhyayBDS13}.}}

\author[1]{Sayan Bandyapadhyay\thanks{The work was done while the author was a student in Indian Statistical Institute.}\thanks{sayan-bandyapadhyay@uiowa.edu}}
\author[2]{Aritra Banik\thanks{aritrabanik@gmail.com}}
\author[2]{Sandip Das\thanks{sandip.das.69@gmail.com}}
\author[2]{Hirak Sarkar\thanks{hiraksarkar.cs@gmail.com}}

\affil[1]{University of Iowa\\ Iowa City, USA}
\affil[2]{Indian Statistical Institute\\Kolkata, India}

\begin{document}
\maketitle

\begin{abstract}
%% Text of abstract
\textit{Voronoi game} is a geometric model of competitive facility location problem played between two players. Users are generally modeled as points uniformly distributed on a given underlying space. Each player chooses a set of points in the underlying space to place their facilities. Each user avails service from its nearest facility. Service zone of a facility consists of the set of users which are closer to it than any other facility. Payoff of each player is defined by the quantity of users served by all of its facilities. The objective of each player is to maximize their respective payoff. In this paper we consider the two players {\it Voronoi game} where the underlying space is a road network modeled by a graph. In this framework we consider the problem of finding $k$ optimal facility locations of Player 2 given any placement of $m$ facilities by Player 1. Our main result is a dynamic programming based polynomial time algorithm for this problem on tree network. On the other hand, we show that the problem is strongly $\mathcal{NP}$-complete for graphs. This proves that finding a winning strategy of P2 is $\mathcal{NP}$-complete. Consequently, we design an $1-\frac{1}{e}$ factor approximation algorithm, where $e \approx 2.718$.

\end{abstract}

%% Start line numbering here if you want
%%
%\linenumbers

%% main text

% % %%%%%Start of section 1 edited %%%%% % %
% % % % % Section 1 % % % % % % %

\section{Introduction}\label{sec:intro}

In Competitive facility location problem several market players compete with each other for placing facilities (post office, shopping mall etc.) (\cite{CFL1},\cite{CFL4},\cite{CFL6_Hakimi},\cite{CFL3_Hakimi},\cite{CFL5_Hakimi},\cite{CFL2}). The customers choose the best facility to get services with respect to some specific requirements. The goal is to attract as much customers as possible. For a comprehensive study see the surveys (\cite{CFL_survey2},\cite{CFL_survey1}).

Competitive facility location can also be viewed from the perspective of Game theory. Here in each move a market player places her facilities judiciously so that her {\it Gain} or {\it Payoff} is maximized. An interesting direction is to study how the decision of these players affect each other. Thus game theoretic arguments are used to analyze the best move or winning strategy of the players. 

Ahn {\em et al.}~\cite{Ahn_et_al} consider a competitive facility location problem which they call the \textit{Voronoi Game}. There are only two players P1 and P2 who play this game against each other. Both of the players place a specified number, $m$, of facilities alternately, starting with P1 ($m$ round game). The facilities are placed in a planar region $U$. After placement of all the 2$m$ facilities the nearest neighbor Voronoi diagram of those 2$m$ points is computed and the Voronoi region corresponding to each facility is assigned to it as its service zone. Service zone of a player is the union of the service zones corresponding to its $m$ facilities. The player whose service zone is having larger area wins the game.

Considering the complications of the planar version Ahn {\em et al.}~\cite{Ahn_et_al} focus on an one-dimensional version of this game, where the region is a line segment or a circular arc. They show that the second player always has a winning strategy for this version. They have also considered another version of the game, where instead of placing the facilities alternately P1 places its $m$ facilities at first and then P2 places its $m$ facilities (one-round game). They show that in this case the first player always has a winning strategy. The one-round planar version has studied by Cheong {\em et al.}~\cite{Cheong_One-Round_Voronoi_Game} for a square-shaped region. In this case also the second player always has a winning strategy. Fekete {\em et al.}~\cite{Fekete_one-round_Voronoi_game} have studied the planar one-round version for a rectangular region with aspect ratio $\rho$. They have shown that the second player has a winning strategy for $m\geq 3$ and $\rho > \frac{\sqrt{2}}{m}$, and for $m=2$ and $\rho > \frac{\sqrt{3}}{2}$. The first player wins in all the remaining cases. 

In real life scenario often the facilities like shopping malls are allowed to be placed only on (or beside) road networks (\cite{CFL6_Hakimi},\cite{Hakimi_et_al},\cite{CFL_duopoly},\cite{CFL_tree1},\cite{CFL_survey1}). The customers are also assumed to be on (or beside) the road network for the sake of reachability. The customers always choose their nearest (along the edges of the road network) facility. The problem of interest is to find the placement location of the facilities that attract maximum number of customers. Teramoto, Demaine and Uehara \cite{Voronoi_game_on_graphs} and Durr {\em et al.}~\cite{nash_DurrT07} independently consider this model which they call \textit{discrete} Voronoi game. Here the road network is modeled using a weighted graph. Two players alternately occupy 2$n$ vertices of the graph. Each vertex is assigned to the player who occupies the nearest (with respect to shortest path distance) vertex to it. Either a player dominates larger number of vertices or the game ends in a tie. They have studied the game on complete $k$-ary tree. They show that P1 has a winning strategy if (1) 2$n \leq k$, or (2) $k$ is odd and the complete $k$-ary tree contains at least $(k^3n^2-1)/(k-1)$ vertices. In contrast, in case when $k$ is even, $2n > k$, and the complete $k$-ary tree contains at least $(k^3n^2-1)/(k-1)$ vertices, two players tie if they play optimally. They also consider a restricted version of the game where P1 occupies only one vertex and P2 occupies $n$ vertices. Surprisingly for this case they have shown that it is $\mathcal{NP}$-complete to determine whether P2 has a winning strategy. Moreover, they show that for a given graph $G$ and the number $n$ of turns it is $PSPACE$-complete to determine whether P1 has a winning strategy. Kiyomi, Saitoh and Uehara \cite{Vgameonpath} consider discrete Voronoi game on paths. They show that if the length of the path is even and the number of rounds is even then P1 has a trivial winning strategy. In all the other cases the game ends in a tie. Existence of pure Nash equilibrium has also been studied on this model (\cite{nash_DurrT07},\cite{nash_transitive_FeldmannMM09},\cite{nash_cycle_MavronicolasMPS08}). 

In this paper we study a natural extension of discrete Voronoi game. The game is played on a graph embedded in $\mathbb{R}^2$ whose vertices and edges are having non-negative weights. In this model the facilities can be placed either on the vertices or on the points of the edges. At first P1 places $m$ facilities and then P2 places $k$ facilities. A point on the graph (point on an edge or a vertex) is assigned to its nearest (with respect to weighted shortest path distance) facility. In case of tie the point is assigned to the facility of P2. Each facility controls a portion of the graph which is called its service zone. Service zone of a player is the collection of service zones corresponding to its facilities. Payoff of a player is the weight of its service zone (sum of the weights of the vertices, edges, and portion of edges contained in it). The player with the larger payoff wins the game or in case where both players have same payoff the game ends in a tie.

Considering the above mentioned model we define the following problem which we call the {\it Maximum Payoff Problem}.\\\\
{\em Maximum Payoff Problem}: Given a weighted graph $G$=$(V,E)$ and a placement of $m$ facilities of P1, find a set of $k$ points $S$ on $G$ that maximizes the payoff of P2.\\\\
Throughout the paper we mainly focus on this problem. We design a polynomial time algorithm to solve the Maximum Payoff Problem on trees. Thus the main result of this paper is the following theorem.

\begin{theorem}\label{thm:opt_tree}
The Maximum Payoff Problem on trees can be solved in polynomial time.
\end{theorem}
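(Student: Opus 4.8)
The plan is to first replace the continuous search for P2's $k$ points by a search over a polynomial candidate set, and then run a bottom-up dynamic program over the tree.

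\textbf{Structure of one facility's captured region.} Fix P1's $m$ facilities and let $d_1(x)$ be the distance from a point $x\in T$ to the nearest of them; on each edge $d_1$ is piecewise linear with slopes $\pm1$ and $O(m)$ breakpoints (the local maxima of $d_1$, where two P1 facilities ``meet'', and the points P1 occupies). Since the tie rule gives $x$ to P2, P2 captures $x$ exactly when $S$ contains a point within distance $d_1(x)$ of $x$, so the region captured by one facility $p$, namely $C_p=\{x:d(p,x)\le d_1(x)\}$, is a \emph{subtree}: along any path leaving $p$ the quantity $d_1(\cdot)-d(p,\cdot)$ is non-increasing because $d_1$ is $1$-Lipschitz for the shortest-path metric, so once capture is lost it is not regained. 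P2's payoff is the weight $w\bigl(\bigcup_{p\in S}C_p\bigr)$.

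\textbf{Discretization.} I would prove that some optimal $S$ uses only points of a polynomial set $\mathcal{C}$: the vertices, the breakpoints of $d_1$, and, for every vertex $w$ and every edge $e$, the at most two points $p\in e$ with $d(p,w)=d_1(w)$ (the placements whose captured region ends exactly at a vertex). The argument is an exchange/perturbation: among optimal solutions pick one minimizing the number of facilities outside $\mathcal{C}$, take such a ``bad'' facility $p$ in the interior of an edge, and slide it along that edge; as long as no frontier point of $C_p$ meets a vertex or a breakpoint and $p$ meets none, only $C_p$ changes and every frontier point moves at a constant rate, so the payoff is affine in the displacement and we may slide to an endpoint of the linearity interval without loss, which puts $p$ into $\mathcal{C}$.

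\textbf{The dynamic program.} Subdivide $T$ at $\mathcal{C}$ to obtain a tree $T'$ of polynomial size on which $d_1$ is affine of slope $\pm1$ on each edge and some optimum places all $k$ facilities at vertices of $T'$; root $T'$ arbitrarily and write $T'_v$ for the subtree at $v$. The interface between $T'_v$ and the rest is the single point $v$, and the only data crossing it are the distance $\delta$ from $v$ to the nearest P2 facility inside $T'_v$ and the distance $a$ from $v$ to the nearest P2 facility outside $T'_v$; both take only polynomially many values (pairwise distances in $T'$, or $\infty$). So I would compute $f(v,j,a,\delta)$, the maximum weight of P2's service zone restricted to $T'_v$ over placements of $j$ facilities inside $T'_v$ whose nearest one to $v$ is at distance $\delta$, assuming one external facility at distance $a$ from $v$; the answer is $\max_\delta f(r,k,\infty,\delta)$. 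For the transition at $v$ with children $c_1,\dots,c_t$ along edges of lengths $\ell_1,\dots,\ell_t$ one decides whether a facility sits at $v$ and how to split the remaining budget; given the children's choices $(j_i,\delta_i)$, put $\mu_v=\min\{a,\min_i(\ell_i+\delta_i)\}$, replaced by $0$ if $v$ is used. Then $v$ is captured iff $\mu_v\le d_1(v)$; the reach-up of $v$ is $\min_i(\ell_i+\delta_i)$, or $0$ if $v$ is used; the value pushed into $c_i$ is $a_i=\ell_i+\min\{a,\min_{j\ne i}(\ell_j+\delta_j)\}$, lowered to $\ell_i$ if $v$ is used; and the weight captured on the open edge $e_i$ is a closed-form function of $a_i$, $\delta_i$, and the slope of $d_1$ on $e_i$, since by the monotonicity above this capture set is an initial segment from $v$'s side together with a final segment from $c_i$'s side. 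Summing the $e_i$-contributions, the weight of $v$ if captured, and the children's values $f(c_i,j_i,a_i,\delta_i)$ yields $f$; the ``all-but-one'' term $\min_{j\ne i}$ is handled by additionally guessing the smallest of the values $\ell_j+\delta_j$ over the children, and the budget split by a knapsack-style merge over the children. This table is polynomial and each entry is computed in polynomial time, giving the theorem.

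\textbf{Main obstacle.} The real work is the discretization lemma. The plain perturbation stalls when a facility $p$ is \emph{pinned}: both directions strictly lose because the frontier of $C_p$ coincides on each side with the frontier of another P2 region, so $p$ sits at a kink-type local maximum rather than at a vertex. Showing that such a $p$ can still be relocated into $\mathcal{C}$ without loss — e.g.\ by arguing about the region $C_p$ merged with its neighbouring regions, or by re-optimizing the pinned cluster as a whole — is the delicate step, and it is also what forces one to choose $\mathcal{C}$ (hence $T'$) carefully so that it stays polynomial.
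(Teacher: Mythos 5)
Your write-up leaves its central lemma unproven, and that lemma is exactly where the difficulty of the theorem sits. You define the candidate set $\mathcal{C}$ (vertices, breakpoints of the distance-to-P1 function, and for each vertex $w$ the points at distance $d_1(w)$ from $w$ --- essentially the paper's $\Gamma\cup V$, which also adds a point just past each P1 facility), but your sliding/exchange argument is only sketched, and you yourself concede in the ``Main obstacle'' paragraph that it stalls when a facility is \emph{pinned} by neighbouring P2 regions, deferring that case. Without the discretization there is no finite $T'$ for the dynamic program to run on, so as submitted the proposal does not prove the theorem. The paper closes precisely this gap in Theorem~\ref{thm:optimality} via Observations~\ref{obs:common}--\ref{obs:payoff_change}: for the moving facility one counts only the bisectors whose far side is currently served by P1 ($k_1$ on the $v_x$ side, $k_2$ on the $v_y$ side) and shows that shifting to the nearest candidate point at distance $\delta_1$ (resp.\ $\delta_2$) changes the payoff by at least $(k_1-k_2)\delta_1/2$ (resp.\ $(k_2-k_1)\delta_2/2$). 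The key points are that (i) the territory gained beyond an advancing P1--P2 bisector is genuinely P1's for the whole stretch --- on a tree such a frontier can collide with another P2 facility's frontier only through a junction vertex $v_i$, i.e.\ exactly when the moving facility is at distance $d_i$ from $v_i$, which is a point of $\Gamma$ (and there the payoff even jumps upward); and (ii) territory ceded on the other side is lost at rate at most $1/2$ per path, and possibly not lost at all if another P2 facility absorbs it, which only strengthens the inequality. Both displacements being strictly losing would force $k_1<k_2$ and $k_2<k_1$ simultaneously. So the pinned configuration you fear cannot occur strictly between consecutive candidate points; supplying this bisector-counting (or an equivalent convexity-on-the-interval) argument is the missing step.

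On the algorithmic side your route genuinely differs from the paper's: the paper first partitions the tree at P1's facilities, reduces to \emph{bounded subtrees}, merges with Karush's resource-allocation routine, and solves each bounded subtree by a memoized recursion over ``auxiliary subtrees'' ordered by containment (Lemma~\ref{lem:bounded}); you instead propose a single rooted DP with state (subtree, budget, nearest P2 distance inside $\delta$, guessed outside distance $a$). That is a legitimate and arguably cleaner alternative in the spirit of standard tree facility-location DPs, and it would likely give a better running time than the paper's $O((m+n)m^{11}n^7k^2)$, but it too is only sketched: you need to argue that the guessed value $a$ is consistent (e.g.\ interpret $f(v,j,a,\delta)$ under ``nearest outside facility at distance at least $a$'' and use monotonicity so that parent and child guesses cannot overcount), that $a$ and $\delta$ range over a polynomial set of distances after subdividing at $\mathcal{C}$, and that the ``argmin child'' guess plus knapsack merge is exhaustive. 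These details look fixable, but combined with the unproven discretization lemma the proposal is not yet a proof.
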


At a high level the idea is to characterize a candidate set of polynomial size which contains a solution of the {\em Maximum Payoff Problem}. Then we design an algorithm to find $k$ points from this set which maximizes the payoff of P2. This algorithm is based on dynamic programming and runs in polynomial time.

On the other hand, we prove that the decision version of {\em Maximum Payoff Problem} is strongly $\mathcal{NP}$-complete by reducing it from the {\it Dominating Set Problem}. This implies that finding a winning strategy of P2 is $\mathcal{NP}$-complete. Consequently, we design an $1-\frac{1}{e}$ factor approximation algorithm for this problem, where $e \approx 2.718$. Lastly, we consider a different problem of finding the maximum payoff of P1 for placing $m$ facilities given that later P2 will place $k$ facilities. As a side effect of our results we obtain a lower bound on the maximum payoff of P1 for trees which is tight indeed for a special class of trees. 

The rest of the article is organized as follows. In Section \ref{sec:prob_def} we formally define the framework. In Section \ref{sec:optongraph} we charaterize the optimal solution of {\it Maximum Payoff Problem}. Then in section \ref{sec:optontree} we prove Theorem \ref{thm:opt_tree}. Section \ref{sec:com_comx} deals with the $\mathcal{NP}$-completeness proof followed by the approximation algorithm in Section \ref{sec:approximation}. We conclude our discussion with the lower bound on the maximum payoff of P1.  

% % % % End of section 1 % % %

% % %%%%%Start of section 2 edited %%%%% % %
% % % % % Section 2 % % % % % % %

\section{Problem Definitions}\label{sec:prob_def}
Let $G$=$(V,E)$ be a weighted graph embedded in $\mathbb{R}^2$. \textit{weight} of a vertex or an edge is defined by a real function $w: E\cup V \rightarrow \mathbb{R^+} \cup \lbrace 0\rbrace$. Consider an edge $e$=$(u,v)$ as a segment $\wideparen{uv}$ of length $w(e)$. Also consider a point $p$ on $e$. $p$ can be considered as a vertex of weight 0 which forms two new edges $(u,p)$ and $(p,v)$ from $e$. The weight of $(u,p)$ and $(p,v)$ are equal to the length of the segments $\wideparen{up}$ and $\wideparen{pv}$ respectively. To distinguish these new edges from the edges in $E$ we refer to them as {\it arcs}. A path between two points $p_1$ and $p_2$ is defined as a sequence of points starting with $p_1$ and ending with $p_2$ such that any two consecutive points share an edge or arc. The weight of a path is the sum of the lengths of the edges and arcs on the path. Define the distance $d(p,q)$ between two points $p$ and $q$ on $G$ (vertices or points on edges) as the length of any weighted shortest path between them. 

We consider a version of Voronoi game on $G$ played between two players P1 and P2. At first P1 chooses a set $F$ of $m$ points on $G$ to place its facilities. Thereafter P2 chooses a set $S$ of $k$ points to place its facilities, where $F \cap S$ is empty. Define the service zone of a facility $f \in F\cup S$ as,$$S(f,F\cup S)=\lbrace p: d(p,f) < d(p,f'), f' \in  (F\cup S)\setminus f\rbrace$$A point equidistant from the facilities of only one player is arbitrarily added to the service zone of one of those facilities. However, a point equidistant from the facilities of both of the players is added to the service zone of one of those facilities of P2. We note that the collection of points in the service zone of a facility can be visualized as a connected graph embedded in $\mathbb{R}^2$ which contain a subset of vertices, edges and arcs of $G$. All the points on those edges or arcs must belong to the service zone of that facility. Henceforth we consider the service zone of a facilty as a subgraph of $G$. In Figure \ref{fig:srve_zone} three facilities $f_1,f_2$ and $f_3$ are placed on a tree with unit vertex and edge weights. The service zone of $f_2$ contains the vertices $v_4,v_6$, the edge $(v_4,v_6)$, and the arcs $(p_1,v_4)$, $(v_4,p_2)$.

\begin{figure}[ht]
  \centering
    \includegraphics[height=30mm]{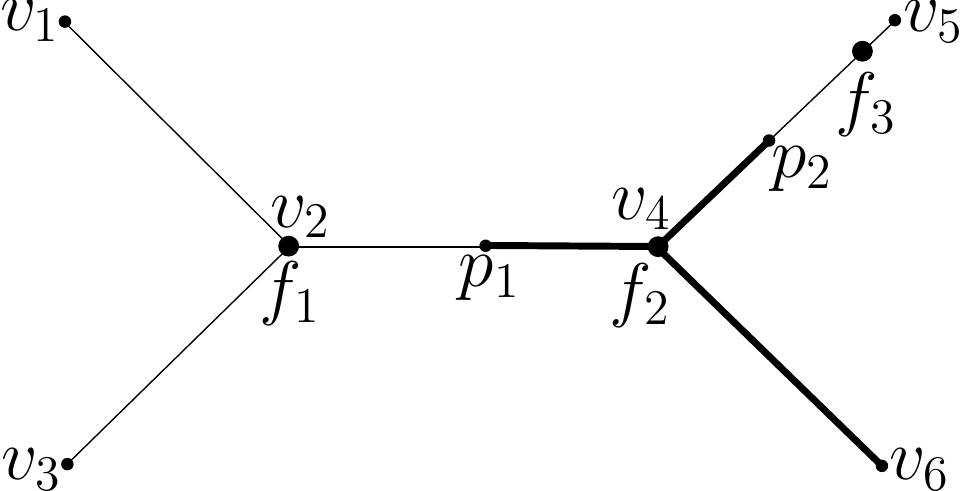}
  \caption{Service zone of $f_2$ (shown in bold)}
  \label{fig:srve_zone}
\end{figure}

For a facilty $f$, let $\mathcal{Z}(f,F\cup S)$ denote the sum of the weights of the vertices, edges and arcs in $S(f,F\cup S)$. Given two sets of facilities $F$ and $S$ placed by P1 and P2 respectively, we define the payoff of P1 as,$$\mathcal{Q}_1(F,S)=\sum_{f\in F }\mathcal Z(f,F\cup S)$$ 
Thus the payoff of P2, $\mathcal{Q}_2(F,S)$=$\mathcal{W}-\mathcal{Q}_1(F,S)$, where $\mathcal{W}$ is the sum of the weights of the vertices and edges of $G$. Given a set of facilities $F$ placed by P1 we define the maximum payoff of P2 as $\eta(F)$=$\max_{S} \mathcal{Q}_2(F,S)$, where maximum is taken over all possible $k$ facility locations of P2. Now we formally define the game framework which we call {\it One-Round $(m,k)$ Voronoi Game on Graphs}.

\begin{description}
  \item[{\it One-Round $(m,k)$ Voronoi Game on Graphs:}] Given a graph $G=(V,E)$, a weight function $w$ 
  and two players P1 and P2 interested in placing $m$ and $k$ facilities respectively, P1 chooses a set $F^*$ of $m$ 
  facility locations on $G$ following which P2 chooses a set $S^*$ of $k$ facility locations on $G$ disjoint from 
  $F^*$ such that:
    \begin{enumerate}[(i)]
      \item $\min_{F}$ $\eta(F)$ is attained at $F$=$F^*$, where the minimum is taken over all possible 
      set of $m$ facility locations $F$ of P1. 
      \item $\max_{S} \mathcal{Q}_2(F^*,S)$ is attained at $S$=$S^*$, where maximum is taken over all 
      possible set of $k$ facility locations $S$ of P2.
    \end{enumerate}
\end{description} 
Throughout the paper we consider the framework {\it One-Round $(m,k)$ Voronoi Game on Graphs}. Specifically we are interested in the optimal facility location problem for P2 on this framework which we have defined before as the {\it Maximum Payoff Problem}. With respect to this framework, given any set $F$ of $m$ facilities of P1 we are interested in finding a set of $k$ points $S$ on $G$ that maximizes $\mathcal{Q}_2(F,S)$.

% % % % End of section 2 % % %

% % %%%%%Start of section 3 edited %%%%% % %
% % % % % Section 3 % % % % % % %

\section{Characterization of Optimal Facility Locations of P2}\label{sec:optongraph}%\vspace{-0.1in}

In this section we characterize the optimal solution of the {\em Maximum Payoff Problem}, 
which is going to be used extensively in the following sections. To be precise we characterize a finite set
of points which contains an optimal solution. Note that the number of optimal solutions may be infinite. 
Figure \ref{fig:cor1} shows an example tree with unit vertex and edge weights, where One-Round $(4,1)$ Voronoi Game is played. Here optimal 
placement by P2 can be at any point on the edge $(v_1,v_2)$ making the search space for possible optimal 
locations infinite. Note that here the maximum payoff of P2 is 5 (2 from the vertices and 3 from the edges).

\begin{figure}[ht]
\centering
\includegraphics[height=30mm]{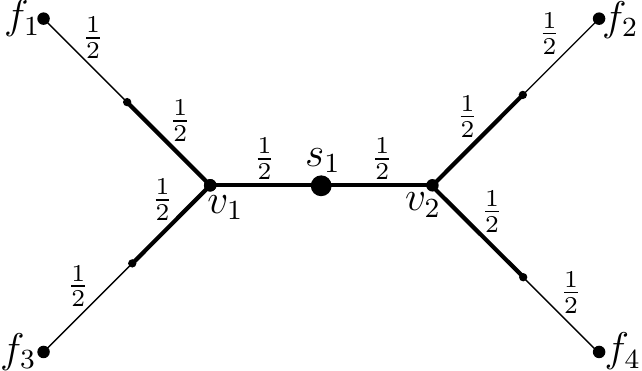}
\caption{Example of infinite optimal solutions: a possible location is $s_1$ and the corresponding service zone is shown in bold}
\label{fig:cor1}
\vspace{-0.1in}
\end{figure}

Consider any facility $f$ placed by P1 or P2. A point $b$ on $G$ is called a {\em bisector} corresponding to $f$ if there exists another facility $f'$ such that the path between $f$ and $b$ is served by $f$, the path between $b$ and $f'$ is served by $f'$, and the lengths of these paths are equal. Note that the bisectors corresponding to a facility demarcate its service zone from other facilities. As the number of paths between $f$ and any other facility is finite and each such path can contain at most one bisector, the number of such bisectors is also finite. Consider the One-Round $(4,2)$ Voronoi Game played on the graph shown in Figure \ref{fig:cor21}. $\{f_1,f_2,f_3\}$ and $\{s_1,s_2\}$ are the sets of facilities placed by P1 and P2 respectively. The service zone of $s_1$ is denoted by the bold arcs and demarcated by the bisectors $a$, $b$, $c$, $d$ and $e$. The distance of $a$ from $s_1$ is equal to the distance of $a$ from $f_3$ along the path $(s_1,v_y,a,v_2,f_3)$. The path $(a,v_2,f_3)$ is served by P1 and the path $(a,v_y,s_1)$ is served by P2.

\begin{figure}[ht]
\centering
\includegraphics[height=40mm]{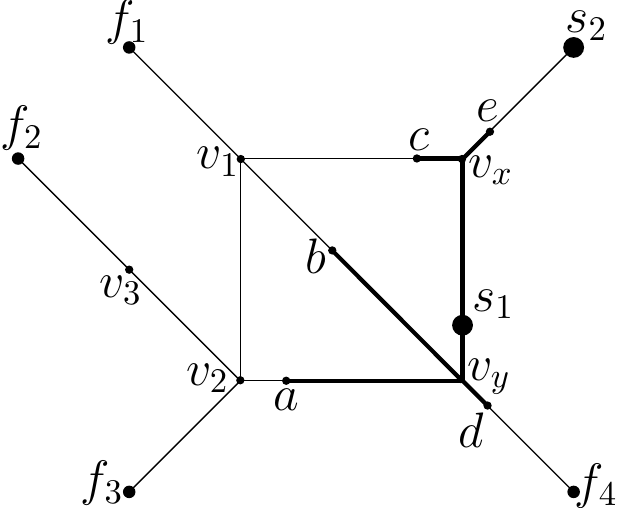}
\caption{Service zone of $s_1$ (shown in bold)}
\label{fig:cor21}
\vspace{-0.1in}
\end{figure}

Consider any placement $F$ and $S$ by P1 and P2 respectively. Observe that the bisectors corresponding to a facility $s_i$ of P2 can be of two types. The first type of bisectors split a path between $s_i$ and $f_j$, where $f_j \in F$. The second type of bisectors split a path between $s_i$ and $s_j$, where $i \neq j$ and $s_j \in S$. In Figure \ref{fig:cor21}, $a,b,c,d$ is of the first type and $e$ is of the second type of bisectors. Let us investigate that how the service zone of P2 changes if one of its facilities $s_i$ is moved along the edge on which it is lying.
Again consider the example graph in Figure \ref{fig:cor21}. Suppose the facility at $s_1$ 
is shifted by a very small distance towards $v_x$ and its new location is say $s_1'$. Subsequently bisector $a$ on the
edge $(v_2,v_y)$ shifts towards $v_y$. On the other hand, bisector $c$ on edge $(v_1,v_x)$ shifts towards $v_1$.
The bisector $e$ also shifts towards $s_2$ along the path $(s_1,v_x,s_2)$, but that does not
change the overall payoff of P2 as the path $(s_1,v_x,s_2)$ is always in the service zone of P2. Henceforth by bisector we refer to the bisectors of first type, as the other type does not contribute in computation of change in payoff.

Consider a facility $s$ $\in$ $(v_x,v_y)$ placed by P2. Also consider the bisectors corresponding to $s$ and their respective paths from $s$ to the facilities of P1. Such a path is called a $v_x$ {\it path} if it contains the vertex $v_x$ and $v_y$ does not appear in between $s$ and $v_x$ in it. A path correspond to one of those bisectors which is not a $v_x$ path is called a $v_y$ {\it path}. Now consider the two sets of bisectors $B_x$ and $B_y$ corresponding to the $v_x$ and $v_y$ paths. Note that these two sets of bisectors are not necessarily disjoint (see Figure \ref{fig:fig13}). Take two paths $\pi_1$ and $\pi_2$ corresponding to a bisector $b$ common to both of these sets. Without loss of generality say $\pi_1$ is a $v_x$ path corresponding to $s$. Then $\pi_2$ must be a $v_y$ path corresponding to $s$. Also let $f_j$ be the facility corresponding to $\pi_1$ and $\pi_2$. Note that if the facility at $s$ is shifted along the edge $(v_x,v_y)$, then the distance between $s$ and $b$ decreases. Hence after movement of $s$, P2 occupies more users from $\pi_1$ and $\pi_2$ and the payoff of P2 increases along these paths. Thus we have the following observation.

\begin{figure}[ht]
\centering
\includegraphics[height=25mm]{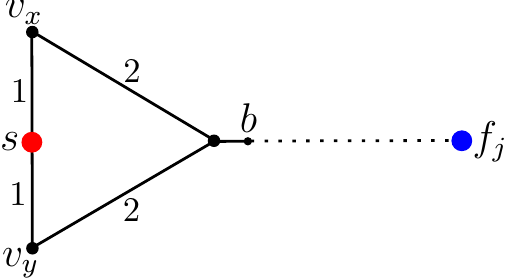}
\caption{Demonstration of common bisector $b$}
\label{fig:fig13}
\vspace{-0.1in}
\end{figure}
% \vspace{-0.1in}
\begin{obs}\label{obs:common}
Consider a facility $s$ of P2 placed on $(v_x,v_y)$. If two $v_x,v_y$ paths share a common bisector, then the movement of $s$ towards $v_x$ or $v_y$ increases the payoff of P2 along these paths.  
\end{obs}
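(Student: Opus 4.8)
\emph{Proof approach.} By symmetry it is enough to analyse a move of $s$ towards $v_x$: slide $s$ a small distance $\varepsilon>0$ along $(v_x,v_y)$ to a point $s'$. Fix the $v_x$ path $\pi_1$ and the $v_y$ path $\pi_2$ that share the bisector $b$, and let $f_j\in F$ be the P1 facility to which $b$ corresponds along both of them. By the definition of a bisector, on each $\pi_i$ the portion from $s$ to $b$ is served by $s$, the portion from $b$ to $f_j$ is served by $f_j$, and the two portions have equal length; since the $s$--$b$ portion lies inside $s$'s service zone it is a shortest $s$--$b$ path, so $d(s,b)=d(f_j,b)$, each $\pi_i$ has length $2d(s,b)$, and the initial sub-arc of $\pi_1$ (resp. $\pi_2$) is precisely the stretch of the edge $(v_x,v_y)$ from $s$ to $v_x$ (resp. $v_y$).

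The heart of the argument is that $d(s,b)$ \emph{strictly decreases} under the move --- and this is exactly why the statement holds for a shift towards \emph{either} endpoint. Indeed $b$ is reachable from $s$ along $\pi_1$ through $v_x$ and along $\pi_2$ through $v_y$; sliding $s$ towards $v_x$ shortens the initial sub-arc of $\pi_1$ by $\varepsilon$ and leaves the rest of $\pi_1$ untouched, so $d(s',b)\le d(s,b)-\varepsilon$ (an equality for $\varepsilon$ small); a shift towards $v_y$ instead shortens $\pi_2$, with the same conclusion. Since $f_j$ has not moved, $d(f_j,b)=d(s,b)>d(s',b)$, so $b$ now lies in the service zone of $s'$, and a one-line computation with path lengths shows that the (first-type) bisector between $s'$ and $f_j$ along $\pi_i$ has moved from $b$ a distance $\varepsilon/2$ towards $f_j$. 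Hence P2 picks up an extra stretch of length $\varepsilon/2$ beyond $b$ along $\pi_1$ and along $\pi_2$ (a single stretch if $\pi_1,\pi_2$ share their $b$--$f_j$ portion, two otherwise).

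It remains to check that P2 loses nothing on the $s$--$b$ portions, so the net effect on $\pi_1\cup\pi_2$ is a strict gain. On $\pi_1$ this is trivial, as $s'$ is closer to $v_x$ and hence to every point of that portion, which already belonged to P2. On $\pi_2$ one must rule out a loss near $v_y$, since $s'$ is now \emph{farther} from $v_y$; the resolution is that $b$ has become cheap to reach from $s'$ through $v_x$, which shields the whole $s$--$b$ stretch of $\pi_2$: for a point $p$ on it at distance $t$ from $s$ along $\pi_2$ one has $d(s',p)\le d(s',b)+(d(s,b)-t)=2d(s,b)-\varepsilon-t$, which is strictly below $d(f_j,p)=2d(s,b)-t$ when $f_j$ reaches $p$ through $b$, so $p$ stays with P2. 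Combining with the previous paragraph, P2's payoff restricted to $\pi_1\cup\pi_2$ strictly increases.

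\noindent\textbf{Main obstacle.} The first two paragraphs are essentially bookkeeping with lengths along $\pi_1$ and $\pi_2$. The delicate point is the no-loss step on $\pi_2$ near $b$: there the margin $d(f_j,p)-d(s,p)$ shrinks to $0$, so one cannot simply pick $\varepsilon$ below a uniform margin, and one needs that $f_j$'s shortest route to a near-$b$ point $p$ of $\pi_2$ passes through $b$ (so that $d(f_j,p)=2d(s,b)-t$). On a general graph this should be argued from the facts that the $b$--$f_j$ portion of $\pi_2$ is served by $f_j$ and the $s$--$b$ portion by $s$ (so no shorter $f_j$--$p$ route can exist without forcing a bisector strictly between $s$ and $b$); alternatively one plays the two upper bounds $d(s',p)\le\varepsilon+t$ and $d(s',p)\le2d(s,b)-\varepsilon-t$ against $d(f_j,p)\ge t$, handling equidistant stretches separately via the tie-breaking rule. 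Everything else follows routinely.
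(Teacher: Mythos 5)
Your argument is correct and follows essentially the same route as the paper: the shared bisector $b$ is reachable from $s$ through both $v_x$ and $v_y$, so a shift in either direction decreases $d(s,b)$ and pushes the service-zone boundary toward $f_j$ along both $\pi_1$ and $\pi_2$. The paper's own justification is exactly this short informal observation; your $\varepsilon/2$ bookkeeping and the no-loss check on the $s$--$b$ portion of $\pi_2$ are added rigor that the paper simply omits.
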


For the time being assume that $B_x \cap B_y$ is empty. Suppose a facility $s$ of P2 is shifted till the moment when one of its bisectors reaches to a vertex, say $v_l$, for the first time. If $s$ is moved further in the same direction, the number of bisectors of $s$ could be changed. Say $\epsilon$ be a distance such that if $s$ is shifted by $\epsilon$ unit from its original position, the bisectors on any path do not cross any vertex. In other words the bisectors does not change the edges on which they were lying initially. We call such an $\epsilon$ a \textit{safe distance}. Consider a path between $s$ and $f_j\in F$. If $s$ is shifted by $\epsilon$ unit along this path, then the current path between $s$ and $f_j$ shrinks by $\epsilon$ unit. Thus now the bisector on this path is shifted by $\frac{\epsilon}{2}$ unit. Similarly when $s$ is shifted away from this path, the current path between $s$ and $f_j$ expands by $\epsilon$ unit and the corresponding bisector shifts by $\frac{\epsilon}{2}$ unit. Thus we have the following observation.

\begin{obs}\label{obs:shift}
Consider a facility $s$ of P2 placed on $(v_x,v_y)$ and let $\epsilon$ be a safe distance. Suppose $s$ is shifted by $\epsilon$ unit, then each of those bisectors shifts by $\frac{\epsilon}{2}$ unit.
\end{obs}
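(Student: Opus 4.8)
The plan is to track a single bisector $b$ of $s$ (of the first type, as agreed just before the statement) under the perturbation and show that its displacement is exactly $\epsilon/2$; since each bisector is governed by its own path, this settles all of them simultaneously. First I would fix the facility $f_j\in F$ and the path $\pi$ from $s$ to $f_j$ that give rise to $b$, and record the defining identity: $b$ splits $\pi$ into two subpaths of equal length, one served by $s$ and one by $f_j$, so that $d_\pi(s,b)=d_\pi(b,f_j)=\tfrac12\,\ell(\pi)$, where $\ell(\pi)$ denotes the weight of $\pi$ and $d_\pi$ distance measured along $\pi$. I would also invoke the hypothesis that $\epsilon$ is a safe distance, so that after the shift $b$ still lies on the same edge it occupied before (and, shrinking $\epsilon$ if necessary, $s$ remains interior to $(v_x,v_y)$).

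Next I would decompose $\pi$. Taking $\pi$ to be simple, it runs along $(v_x,v_y)$ from $s$ to exactly one endpoint $v\in\{v_x,v_y\}$ and then does not use that edge again, so $\ell(\pi)=d_e(s,v)+L'$, where $d_e(s,v)$ is the length of the sub-arc of $(v_x,v_y)$ from $s$ to $v$ and $L'$, the length of the remainder of $\pi$, does not depend on where $s$ sits on the edge. (If $f_j$ itself lies on $(v_x,v_y)$, the same holds with $L'=0$.) Here $v=v_x$ precisely when $\pi$ is a $v_x$ path, the $v_y$ path case being symmetric. The key consequence is exactly the ``shrinks/expands by $\epsilon$'' phenomenon noted just before the statement: moving $s$ by $\epsilon$ toward $v$ decreases $\ell(\pi)$ by $\epsilon$, and moving it away from $v$ increases $\ell(\pi)$ by $\epsilon$, while $L'$ stays fixed.

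Then I would locate the new bisector $b'$ in two cases. If $b$ lies off the edge $(v_x,v_y)$: the portion of $\pi$ from $f_j$ up to and including the edge carrying $b$ is unchanged by the shift, whereas $d_\pi(f_j,b')=\tfrac12\ell(\pi')$ differs from $d_\pi(f_j,b)=\tfrac12\ell(\pi)$ by $\tfrac12\cdot\epsilon=\epsilon/2$; since $b$ and $b'$ sit on the same (fixed) edge, this is a physical displacement of $\epsilon/2$. If $b$ lies on $(v_x,v_y)$ itself: introducing a coordinate along the edge and writing $t,\beta$ for the positions of $s,b$, the bisector condition $|t-\beta|=|\beta-p_v|+L'$ (with $p_v$ the coordinate of $v$) exhibits $\beta$ as an affine function of $t$ of slope $\tfrac12$ in absolute value, so replacing $t$ by $t\pm\epsilon$ moves $b$ by $\epsilon/2$. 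Combining the two cases yields the observation.

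I expect the only genuinely delicate point to be the structural claim in the second step, namely that $\pi$ meets the edge $(v_x,v_y)$ only in its initial segment, so that the ``tail length'' $L'$ is really a constant of the perturbation. This is automatic on trees, where simple paths are unique, and more generally it follows once one insists that the bisector-defining paths be simple; granting this, everything else reduces to the elementary fact that the midpoint of a path whose length varies by $\epsilon$ moves by $\epsilon/2$.
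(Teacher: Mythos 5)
Your proposal is correct and follows essentially the same route as the paper, which treats the observation as an immediate consequence of the remark that shifting $s$ by a safe distance $\epsilon$ shrinks or expands the path to $f_j$ by exactly $\epsilon$, so the equidistant point on that path moves by $\epsilon/2$. Your version merely makes this explicit (the fixed tail length $L'$, the two cases for where $b$ sits, and the role of the safe distance in keeping $b$ on its edge), which is a careful elaboration rather than a different argument.
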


Consider a $v_x$ path between $s$ and a facility $f_j \in F$. If $s$ is moved towards $v_x$ by a safe distance, say $\epsilon$, the path between $s$ and $f_j$ shrinks. But, considering the old path between $s$ and $f_j$ the payoff of P2 increases by $\frac{\epsilon}{2}$ unit on this path by Observation \ref{obs:shift}. As $B_x \cap B_y$ is empty the bisector corresponding to this path does not appear in $B_y$. Consider any $v_y$ path between $s$ and a facility $f_i \in F$. Then due to the movement of $s$ towards $v_x$ the path between $s$ and $f_i$ expands. Thus P2 misses a payoff of $\frac{\epsilon}{2}$ unit from the old path between $s$ and $f_i$. Thus if there are $k_1$ $v_x$ paths and $k_2$ $v_y$ paths payoff of P2 increases in $k_1$ paths and decreases in $k_2$ paths by $\frac{\epsilon}{2}$ unit. Similarly, if $s$ is shifted towards $s$ by a safe distance $\epsilon$, payoff of P2 decreases in $k_1$ paths and increases in $k_2$ paths by $\frac{\epsilon}{2}$ unit. Hence we have the following observation.

\begin{obs}\label{obs:payoff_change}
Suppose $s$ is a facility of P2 placed on $(v_x,v_y)$. Say $B_x \cap B_y$ is empty and $|B_x|=k_1$, $|B_y|=k_2$. Then if $s$ is shifted towards $v_x$ (resp. $v_y$) by a safe distance, say $\epsilon$, the payoff of P2 increases (decreases) in $k_1$ paths and decreases (increases) in $k_2$ paths by $\frac{\epsilon}{2}$ unit.
\end{obs}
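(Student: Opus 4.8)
The plan is to reduce the statement to Observation~\ref{obs:shift} together with careful bookkeeping of which first-type bisectors of $s$ move toward $s$ and which move away as $s$ slides along $(v_x,v_y)$. Fix the safe distance $\epsilon$ and suppose $s$ is shifted toward $v_x$, to a new position $s'$. First I would invoke the hypothesis $B_x\cap B_y=\emptyset$ to conclude that every first-type bisector of $s$ lies on exactly one of a $v_x$ path or a $v_y$ path; this is precisely what lets us add up the per-path payoff changes without double counting, and it rules out the ambiguous configuration handled separately in Observation~\ref{obs:common}.

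Next I would analyze a single $v_x$ path $\pi$ from $s$ to some $f_j\in F$, with bisector $b\in B_x$. Since $\pi$ passes through $v_x$ and $v_y$ does not lie between $s$ and $v_x$ on $\pi$, moving $s$ by $\epsilon$ toward $v_x$ shortens the $s$-to-$f_j$ distance along $\pi$ by $\epsilon$; by Observation~\ref{obs:shift} the bisector moves by $\epsilon/2$, and measuring from the fixed endpoint $f_j$ one checks it moves $\epsilon/2$ closer to $f_j$, that is, away from $s$. Hence on the old path $\pi$ an arc of weight $\epsilon/2$ switches from P1's service zone to P2's, so P2's payoff along $\pi$ increases by $\epsilon/2$. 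The symmetric analysis for a $v_y$ path $\pi'$ from $s$ to some $f_i\in F$ shows that sliding $s$ toward $v_x$ lengthens the $s$-to-$f_i$ distance by $\epsilon$ (the detour through $v_x$ grows while the portion past $v_y$ is untouched), so the corresponding bisector moves $\epsilon/2$ toward $s$, and P2 loses an arc of weight $\epsilon/2$ on $\pi'$.

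Summing over the $k_1$ paths whose bisectors constitute $B_x$ and the $k_2$ paths whose bisectors constitute $B_y$ --- disjoint contributions, again by $B_x\cap B_y=\emptyset$, and recalling that second-type bisectors delimit regions lying wholly inside P2's zone and therefore contribute nothing to the change, as noted just before Observation~\ref{obs:common} --- gives a net gain of $\epsilon/2$ on each of $k_1$ paths and a net loss of $\epsilon/2$ on each of $k_2$ paths. Shifting $s$ toward $v_y$ instead is identical with the roles of $B_x$ and $B_y$ interchanged, which yields the parenthetical statement.

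I expect the main subtlety to be the ``measure on the old path'' step: one must be sure that, within a safe distance, no bisector migrates off its current edge, that the identification of old paths with new paths is well defined, and that a bisector counted in $B_x$ cannot also surface as a $v_y$ bisector after the move. This is exactly where the safe-distance hypothesis and $B_x\cap B_y=\emptyset$ do the work, and writing that part out cleanly is the portion I would treat in full detail.
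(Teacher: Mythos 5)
Your proposal is correct and follows essentially the same route as the paper: the paper's justification is exactly the per-path accounting via Observation~\ref{obs:shift} (a $v_x$ path shrinks and its bisector shifts $\frac{\epsilon}{2}$ in P2's favour, a $v_y$ path expands and P2 loses $\frac{\epsilon}{2}$), with $B_x\cap B_y=\emptyset$ ensuring the $k_1$ and $k_2$ contributions are counted on distinct paths and second-type bisectors ignored. The subtleties you flag (bisectors staying on their edges, identifying old with new paths) are handled in the paper precisely by the safe-distance definition, just as you propose.
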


Now while moving the facility $s$ suppose a bisector touches a vertex, say $v_l$, for the first time. Thus it is also the first time when $v_l$ comes in the service zone of P2 from the service zone of P1. So the payoff of P2 is increased by at least the weight of $v_l$ at that moment. This corresponding to a situation when the distance of $s$ and $v_l$ is same as the distance between $v_l$ and 
$f_l$, where $f_l$ is one of the facilities of P1 closest to $v_l$. This current location of $s$ is a transition point, when
$v_l$ moves from service zone of P1 to service zone of P2. To capture these transition points we define the following set. 
For any vertex $v_i\in V$, denote one of the facilities of P1 closest to $v_i$ by $f(v_i)$ and the 
distance between $v_i$ and $f(v_i)$ by $d_i$. Let $\Gamma(v_i)$ be the set of points in $G$ excluding $f(v_i)$ 
which are at a distance $d_i$ from $v_i$. Define $\Gamma$=$\cup_{1\leq i\leq n} \Gamma(v_i)$. It is easy to verify 
that any edge can contain at most two points of $\Gamma(v_i)$. Thus $\Gamma(v_i)$ contains $O(|E|)$ points
and consequently $\Gamma$ contains $O(|V||E|)$ points.

Let $f_t$ be any facility of P1 on any edge $(v_i,v_j)$. Then we assume that there is a point $p \in (f_t,v_j)$ very close to $f_t$ such that the distance between $p$ and $f_t$ is small enough to be considered as zero. For any such $f_t$ and $(v_i,v_j)$ that point is included into $\Gamma$ and we have the following observation.

\begin{obs}
The number of points in $\Gamma$ is $O(|V||E|+m)$.\label{obs:gamma}
\end{obs}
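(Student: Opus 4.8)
The plan is to bound $|\Gamma|$ by separately counting the points it inherits from the sets $\Gamma(v_i)$ and the finitely many points attached to the P1 facilities that lie in edge interiors, and then to add the two contributions; the first bound is essentially the $O(|V||E|)$ estimate already recorded in this section, so the observation follows once the second, trivial count is supplied.

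First I would make rigorous the claim the text calls ``easy to verify'': that a single edge $e=(u,v)$ contains at most two points of $\Gamma(v_i)$. Parametrize the points of $e$ by their arc-length distance $x\in[0,w(e)]$ from $u$, and write $p_x$ for the corresponding point. Since $p_x$ (when interior to $e$) is adjacent in $G$ only to the two arcs of $e$, every path from the vertex $v_i$ to $p_x$ enters $e$ through $u$ or through $v$, whence $d(v_i,p_x)=\min\{\,d(v_i,u)+x,\ d(v_i,v)+w(e)-x\,\}$. This is the minimum of two affine functions of $x$ with slopes $+1$ and $-1$, hence concave and piecewise linear on $[0,w(e)]$; a concave function on an interval attains the value $d_i$ at most twice, so $\{x:d(v_i,p_x)=d_i\}$ has at most two elements, and excluding $f(v_i)$ only lowers this. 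Summing over the $|E|$ edges gives $|\Gamma(v_i)|\le 2|E|$, and a union bound over the $n=|V|$ sets yields $\bigl|\bigcup_{1\le i\le n}\Gamma(v_i)\bigr|\le 2|V||E|=O(|V||E|)$.

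Next I would account for the extra points: for each of the $m$ facilities $f_t$ of P1 lying in the interior of some edge $(v_i,v_j)$, the construction adds exactly one point $p\in(f_t,v_j)$ near $f_t$, so at most $m$ further points are introduced. Adding the two estimates gives $|\Gamma|\le 2|V||E|+m=O(|V||E|+m)$, as required.

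I do not anticipate a real obstacle here. The only piece of genuine content is the ``at most two points per edge'' statement, and there the entire argument reduces to the elementary fact that the distance from a fixed vertex, measured along a single edge, is a unit-slope concave piecewise-linear function of position and therefore has level sets of size at most two; everything else is a union bound over the $|V|$ sets $\Gamma(v_i)$ together with the trivial observation that at most $m$ facility-adjacent points are added.
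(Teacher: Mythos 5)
Your proposal is correct and follows essentially the same route as the paper: at most two points of $\Gamma(v_i)$ per edge gives the $O(|V||E|)$ count, and the at most $m$ facility-adjacent points added for P1 facilities in edge interiors contribute the $+m$ term; your concavity argument simply makes rigorous the step the paper dismisses as ``easy to verify.'' (One pedantic note: a concave function can attain a value infinitely often if it is locally constant, but since $d(v_i,p_x)=\min\{d(v_i,u)+x,\ d(v_i,v)+w(e)-x\}$ has slopes $\pm 1$ only, its level sets indeed have size at most two, so your bound stands.)
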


Consider a facility $s$ of P2 placed at a point not in $\Gamma \cup V$. Suppose $s$ is shifted along the edge in both directions until it touches a point of $\Gamma \cup V$. We show that in at least one direction the payoff of P2 increases and thus it is always beneficial to place a facility of P2 at a point of $\Gamma \cup V$. The following theorem proves this formally. 

\begin{theorem} \label{thm:optimality}
There exists a size $k$ subset of $\Gamma \cup V$ which is an optimal placement for P2.
\end{theorem}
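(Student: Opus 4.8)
The plan is an exchange argument starting from an arbitrary optimal placement $S$ of P2 against the given set $F$. Suppose some $s\in S$ is not in $\Gamma\cup V$; then $s$ is an interior point of some edge $(v_x,v_y)$ and $s\notin\Gamma$. Freeze the other $k-1$ facilities of $S$ and let $g(t)$ denote P2's payoff $\mathcal{Q}_2(F,\cdot)$ when $s$ is moved to the point of $(v_x,v_y)$ at arc length $t$ from $v_x$, $t\in[0,w(v_x,v_y)]$. I will argue that $g$ attains its maximum at some $q\in\Gamma\cup V$. Since $g(q)\ge g(s)=\eta(F)$ and $\eta(F)$ is the largest payoff P2 can extract from $F$, this forces $g(q)=\eta(F)$, so replacing $s$ by $q$ keeps the placement optimal while putting one more facility in $\Gamma\cup V$. (If $q$ is already occupied by a facility of $S$, then P2 is effectively using fewer than $k$ sites at no loss of payoff and the surplus sites may be parked at arbitrary unused points of $\Gamma\cup V$; if $q$ coincides with a facility of P1, replace it by the point of $\Gamma$ placed infinitesimally past that facility, which was added to $\Gamma$ for exactly this reason just before Observation~\ref{obs:gamma}.) Iterating at most $k$ times yields an optimal placement inside $\Gamma\cup V$, proving Theorem~\ref{thm:optimality}.

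Everything hinges on the shape of $g$, so the next step is to show that $g$ is continuous and piecewise linear with finitely many breakpoints and to locate them. Recall that only the first-type bisectors of $s$ matter for P2's payoff, and that for $s$ on $(v_x,v_y)$ they split into the sets $B_x$ and $B_y$ along $v_x$- and $v_y$-paths. On any $t$-interval on which no first-type bisector of $s$ sits on a graph vertex and no shortest path from $s$ to a facility of P1 changes which of $v_x,v_y$ it leaves through, the sets $B_x$, $B_y$ and the intersection $B_x\cap B_y$ are constant, and when in addition $B_x\cap B_y=\emptyset$ Observations~\ref{obs:shift} and~\ref{obs:payoff_change} give that $g$ is linear there with one-sided slope $\tfrac12(|B_x|-|B_y|)$ towards $v_x$. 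The breakpoints are therefore of three kinds: (i) $s$ lies at a point with $d(s,v_\ell)=d_\ell$ for some vertex $v_\ell$, i.e.\ $s\in\Gamma(v_\ell)\subseteq\Gamma$ --- this is exactly where a vertex may change service zone (changing $g$ by $\pm w(v_\ell)$) or where a bisector migrates to an adjacent edge; (ii) two shortest $s$-to-$f_j$ paths, one via $v_x$ and one via $v_y$, attain equal length, so a common bisector appears and some path switches between the two types --- the configuration of Observation~\ref{obs:common}; (iii) another facility of P2 enters or leaves a shortest $s$-to-$f_j$ path, which by the discussion preceding Observation~\ref{obs:common} concerns only a second-type bisector. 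Each kind is cut out by finitely many linear equations in $t$, so there are finitely many breakpoints; in particular, on a tree neither (ii) nor (iii) can occur --- the path from $s$ to any $f_j$ is unique and leaves the edge through a fixed endpoint --- so there every breakpoint already lies in $\Gamma\cup V$ and the reduction is complete.

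For general graphs the decisive fact is that a type-(ii) breakpoint is a strict local minimum of $g$: Observation~\ref{obs:common} says that there, moving $s$ towards either $v_x$ or $v_y$ strictly increases P2's payoff on the two paths meeting at the common bisector, while by the sign bookkeeping behind Observation~\ref{obs:payoff_change} one of the two directions is not unfavourable on the remaining, now disjoint, $v_x$- and $v_y$-paths; hence $g$ rises away from the point on both sides, and the maximum of $g$ is not located there. One then checks that a type-(iii) breakpoint likewise cannot carry the maximum of $g$ unless that maximum is also attained in $\Gamma\cup V$: when another P2 facility intervenes on the $s$-to-$f_j$ path, the vanishing first-type bisector of $s$ is taken over by the fixed bisector between that P2 facility and $f_j$, and the newly created bisector separating the two P2 facilities is of the second type and does not affect P2's payoff, so any plateau of $g$ at its maximal value extends past such points until it reaches a vertex of $(v_x,v_y)$ or a point of $\Gamma$. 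Being finite piecewise linear, $g$ attains its maximum at a breakpoint or at an endpoint of $[0,w(v_x,v_y)]$, hence --- by the previous two facts --- at a point of $\Gamma\cup V$; this completes the argument.

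The step I expect to be the genuine obstacle is precisely this breakpoint analysis in a general, possibly cyclic, graph: one must be sure the classification (i)--(iii) is exhaustive --- that a shortest path from $s$ can change combinatorially only by switching its exit vertex or by being pre-empted by another P2 facility, and that nothing else alters the one-sided slope --- and one must control the behaviour of $g$ across type-(iii) points carefully enough to exclude a maximum strictly outside $\Gamma\cup V$. Observations~\ref{obs:common}--\ref{obs:payoff_change} are tailored to deliver the linear-piece description and the strict-local-minimum property at common bisectors, and granting those, the extremal wrapper together with the cleanup for coinciding or P1-occupied target points are routine.
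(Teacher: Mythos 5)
Your overall skeleton --- exchange one off-$\Gamma\cup V$ facility at a time, study P2's payoff as a function of that facility's position on its edge, and drive the argument with Observations \ref{obs:common}--\ref{obs:payoff_change} --- is the same as the paper's, but you route it through a global piecewise-linear analysis of $g$ over the whole edge, and that is exactly where the gaps sit. First, the exhaustiveness of your breakpoint classification (i)--(iii) is asserted, not proved (you flag this yourself as the obstacle). Second, and more concretely, your claim that a type-(ii) breakpoint is a strict local minimum does not follow from what you cite: Observation \ref{obs:common} gives a gain in both directions only on the paths through the common bisector, while the sign bookkeeping of Observation \ref{obs:payoff_change} gives a favourable contribution in \emph{one} of the two directions on the remaining paths; together these yield a non-losing direction, not that ``$g$ rises away from the point on both sides.'' Third, the type-(iii) discussion (``one then checks\dots'', the plateau-extension claim) is a sketch rather than an argument, so as written you have not excluded that the maximum of $g$ is attained only at such a point outside $\Gamma\cup V$.

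The paper's proof shows this global machinery is unnecessary. It compares $s_t$ only with the two nearest points $p_l,p_r$ of $\Gamma\cup V$ on its edge: on $(p_l,p_r)$ the bisector structure of $s$ does not change, the bisectors common to $B_x$ and $B_y$ can only help in either direction (Observation \ref{obs:common}), and for the remaining $k_1$ and $k_2$ disjoint bisectors Observation \ref{obs:payoff_change} gives payoff changes of at least $(k_1-k_2)\frac{\delta_1}{2}$ and $(k_2-k_1)\frac{\delta_2}{2}$ for the two moves; assuming both moves strictly lose forces $k_1-k_2<0$ and $k_2-k_1<0$, a contradiction. Hence one of the two nearest points of $\Gamma\cup V$ already serves as a replacement, and iterating over the misplaced facilities finishes the proof. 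To salvage your route you would in effect have to prove this same local statement at every point not in $\Gamma\cup V$ (including your type-(ii) and type-(iii) points) and then add a monotonicity or plateau argument to reach $\Gamma\cup V$, so the global detour buys nothing and currently leaves the theorem unproved for general graphs.
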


\begin{proof}
Let $OPT_S$ be an optimal $k$ placement by P2. We construct a set $A \subseteq \Gamma \cup V$ from $OPT_S$ such that $\mathcal{Q}_2(F,OPT_S) \leq \mathcal{Q}_2(F, A)$. Suppose there is a facility $s$ at $s_t \in OPT_S$ such that $s_t \notin \Gamma \cup V$. Also let $s_t$ belongs to the edge $(v_x,v_y)$. Let 
$p_l \in$  $(v_x,s_t)$ be the point closest to $s_t$ such that $p_l \in \Gamma \cup V$. Similarly let 
$p_r \in$  $(s_t,v_y)$ be the point closest to $s_t$ such that $p_r \in \Gamma \cup V$ (see Figure $\ref{fig:fig5}$).
We show that either $\mathcal{Q}_2(F, OPT_S) \leq \mathcal{Q}_2(F, (OPT_S \setminus \{s_t\}) \cup \{p_l\})$ 
or $\mathcal{Q}_2(F,OPT_S) \leq \mathcal{Q}_2(F, (OPT_S \setminus \{s_t\}) \cup \{p_r\})$. 

\begin{figure}[ht]
\centering
\includegraphics[height=27.5mm]{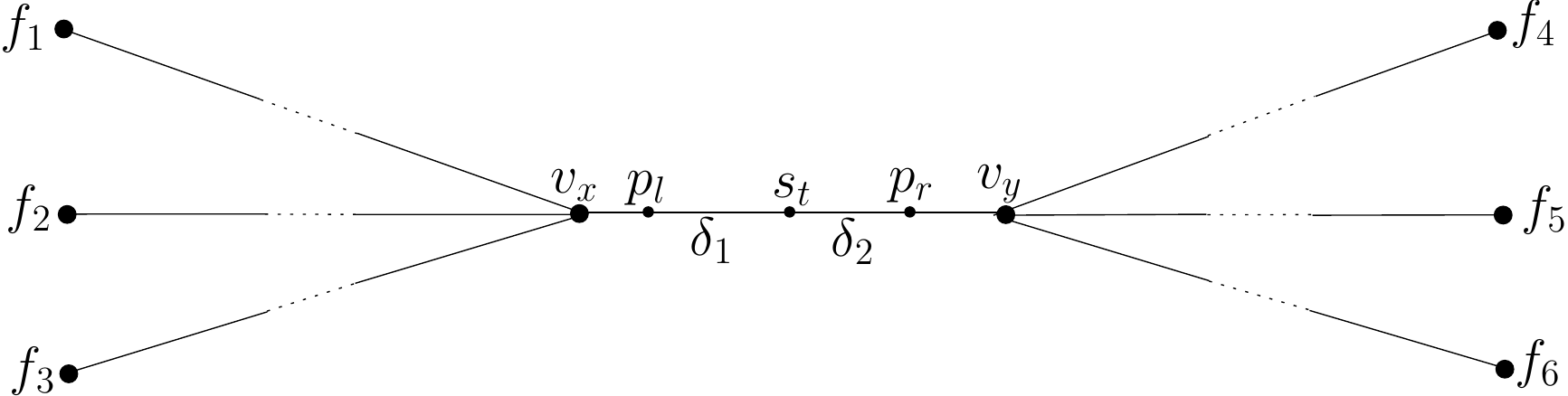}
\caption{Positions of $s_t$, $p_l$ and $p_r$} 
%\vspace{-0.2in}
\label{fig:fig5}
\end{figure}

For the sake of contradiction suppose $\mathcal{Q}_2(F,OPT_S) > \mathcal{Q}_2(F, (OPT_S \setminus \{s_t\}) \cup \{p_l\})$ and
$\mathcal{Q}_2(F,$ $OPT_S) > \mathcal{Q}_2(F,$ $ (OPT_S \setminus \{s_t\}) \cup \{p_r\})$. Let the length of $(p_l,s_t)$ and 
$(s_t,p_r)$ be $\delta_1$ and $\delta_2$ respectively. Consider the $v_x$ paths and the $v_y$ paths corresponding to $s_t$ as defined before. Note that the service zone of P2 changes in only these paths when $s$ is shifted within $(p_l,p_r)$. Also consider the two sets of bisectors $B_x$ and $B_y$ corresponding to the $v_x$ and $v_y$ paths. 
% As the number of paths between $s_t$ and a facility of P1 is finite and each such path can contain at most one bisector, the number of such bisectors is finite. 
Moreover, as $p_l$ and $p_r$ are the closest points of $\Gamma \cup V$ to $s_t$ the number of bisectors of $s$ remains same when $s$ is shifted till $p_l$ or $p_r$. At first consider the bisectors which are present in both $B_x$ and $B_y$. From Observation \ref{obs:common} it follows that when $s$ is shifted till $p_l$ or $p_r$ the payoff of P2 increases along the paths corresponding to these bisectors. Now consider the bisectors which are not shared by the two sets $B_x$ and $B_y$. Let $B_x$ and $B_y$ contains $k_1$ and $k_2$ such bisectors respectively. Then by Observation \ref{obs:payoff_change}, 
% Now again consider the paths between $s_t$ and facilities of P1 corresponding to these bisectors only. If the payoff of P2 increases in such a path while moving $s$ towards $p_l$, the payoff of P2 decreases in that path while moving $s$ towards $p_r$. Say the bisector of this path belong to $B_z$ for $z\in \{x,y\}$. Then in all the paths we have considered corresponding to the bisectors (non-shared only) of $B_z$ the payoff of P2 increases while moving $s$ towards $p_l$, and the payoff of P2 decreases while moving $s$ towards $p_r$. Without loss of generality say if $s$ is shifted to $p_l$, then the service zone of P2 includes new users along $k_1$ such paths and excludes users along $k_2$ such paths. Thus if $s$ is shifted to $p_r$, service zone of P2 includes new users along $k_2$ such paths and excludes users along $k_1$ such paths. By Observation \ref{obs:shift} the amount of spreading or shrinking is $\frac{\delta_1}{2}$ and $\frac{\delta_2}{2}$ corresponding to movement of $s$ to $p_l$ and $p_r$. Thus,
\begin{equation} \label{eq:solve1}
\mathcal{Q}_2(F,(OPT_S \setminus \{s_t\}) \cup \{p_l\}) \geq \mathcal{Q}_2(F,OPT_S) + (k_1-k_2)\frac{\delta_1}{2}
\end{equation}
and 
\begin{equation} \label{eq:solve2}
\mathcal{Q}_2(F,(OPT_S \setminus \{s_t\}) \cup \{p_r\}) \geq \mathcal{Q}_2(F,OPT_S) + (k_2-k_1)\frac{\delta_2}{2}
\end{equation}

Now as, $\mathcal{Q}_2(F,OPT_S) > \mathcal{Q}_2(F, (OPT_S \setminus \{s_t\}) \cup \{p_l\})$ and 
$\mathcal{Q}_2(F,OPT_S) > \mathcal{Q}_2(F, (OPT_S \setminus \{s_t\}) \cup \{p_r\})$, from Equation 
$(\ref{eq:solve1})$ and $(\ref{eq:solve2})$ we get, $(k_1-k_2)\frac{\delta_1}{2} < 0$ and $(k_2-k_1)\frac{\delta_2}{2} < 0$. 
As $\delta_1,\delta_2 > 0$, we get $(k_1-k_2) < 0$ and $(k_2-k_1) < 0$. This is a contradiction as both of $k_1$
and $k_2$ are non-negative integers. Hence the claim follows. 

We add the point $p_l$ in $A$ if $\mathcal{Q}_2(F, OPT_S) \leq \mathcal{Q}_2(F, (OPT_S \setminus \{s_t\}) \cup \{p_l\})$. Otherwise we add $p_r$ in $A$. We repeat this process to replace all such $s_t \in OPT_S$ with $s_t'$ such that $s_t' \in \Gamma \cup V$. Thus we get a set $A \subseteq \Gamma \cup V$ such that $\mathcal{Q}_2(F,OPT_S) \leq \mathcal{Q}_2(F, A)$ which completes the proof of the theorem.
\end{proof}

Note that it is sufficient to search $\Gamma \cup V$ exhaustively to get an optimal solution. But the searching time is 
still exponential in $k$.

% % % % End of section 3 % % %

% % %%%%%Start of section 4 edited %%%%% % %
% % % % % Section 4 % % % % % % %

\section{Maximum Payoff Problem on Trees: Proof of Theorem \ref{thm:opt_tree}}\label{sec:optontree}

Given a weighted tree $T$=$(V,E)$ and a set of facilities $F$=$\lbrace f_1,\ldots,f_m\rbrace$ placed by P1 on $T$ we are interested in finding a set of $k$ optimal facility locations of P2 on $T$. We'll design a polynomial time algorithm for this problem. 
%By Theorem $\ref{thm:optimality}$ it is sufficient to consider only points of $\Gamma \cup V$ on $T$ to find an optimal placement for P2. We propose a polynomial time algorithm for choosing $k$ optimal points from this set. For the sake of simplicity we add the points of $\Gamma$ into the set of vertices. We observe that the new set of vertices contains $O(n^2+m)$ points, where $|V|$=$n$. 

Let $P$=$\{p_1,p_2,\ldots,p_\tau\}$ be any finite set of points on any tree $T'=(V',E')$. We add the points of $P\setminus V'$ into $V'$. Note that now $P$ can be regarded as a set of cut vertices, as removal of these vertices generates a finite number of subtrees (a point of $P$ can appear as a leaf in one or more subtrees). Define a \textit{partition} $T'(P)$ with respect to a finite set of points $P$ as the collection of subtrees of a tree $T'$ generated by removal of the points of $P$. 
% Now consider the facilities placed by P1. Here we reuse the definition of \textit{partition} $T(P)$ of $T$ from Section \ref{sec:tree}, with respect to a finite set of points $P$. 
%Observe that the facilities of P1 partition $T$ into a finite number of subtrees. %Here we deviate a bit from the original definition of subtree and assume the bisectors could also be vertices of a subtree.

We consider the partition $T(F)$ of $T$ (see Figure \ref{fig:part}). Let $|V|=n$. As $|F|=m$ and each point in $F$ can generate a number of subtrees equal to its degree $|T(F)|=O(m+n)$. 
%Let $\Delta$ be the maximum degree of the vertices of $T$. Then there could be $O(m\Delta)$ subtrees of $T$ in $T(F)$. 
We note that a facility placed by P2 in a subtree can not serve a point of another subtree, as each subtree is separated from others by facilities of P1. Thus the computation of the maximum payoff of P2 in these subtrees can be done independent of each other. Suppose the problem of placing $k'\leq k$ facilities in any such subtree is solved. Now we show how to merge those independent solutions to get a global solution for $T$. 

The problem of merging the solutions of individual subtree is similar to the \textit{Optimum Resource Allocation} problem (\cite{ora1},\cite{ora2},\cite{karush},\cite{karush53}). We have a set of $p$ resources and a set of $l$ processors. Corresponding to each processor $i$ there is an efficiency function $g_i$. $g_i(p_i)$ denotes the efficiency of $i^{th}$ processor when $p_i$ resources are allocated to it. Moreover, all the values of $g_i(p_i)$ are known for $0\leq p_i\leq p$ and $1\leq i\leq l$. The \textit{Optimum Resource Allocation} problem is to find an allocation of $p$ resources to $l$ processors so that $\sum_{i=1}^l g_i(p_i)$ is maximized, where $p_i$ resources are allocated to $i^{th}$ processor and $\sum_{i=1}^l p_i = p$. 

The following theorem implies from \cite{karush} by Karush.

\begin{figure*} 
 \centering
  \begin{minipage}[c]{0.5\textwidth}
  \centering
  \includegraphics[width=50mm] 
    {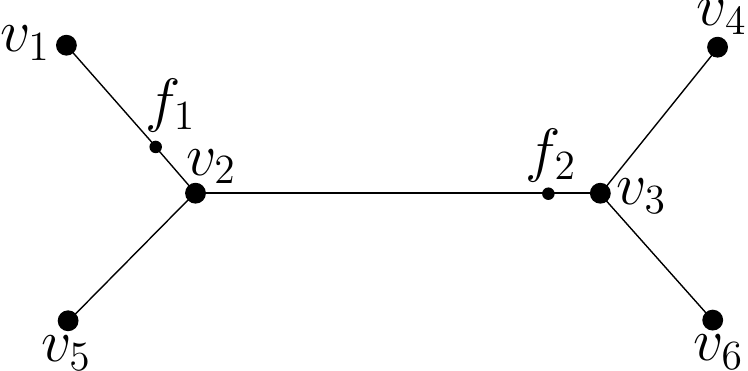}\\
    {\small (a)}\\
    \end{minipage}%
% \vspace{0.2in}
  \begin{minipage}[c]{0.5\textwidth}
  \centering
  \includegraphics[width=53mm]
    {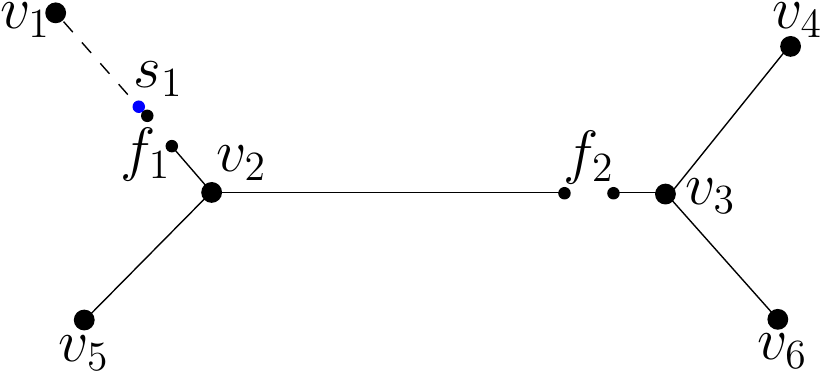}\\
    {\small (b)}\\
    \end{minipage}%
    %\hspace{-0.25in}
  \caption{An example tree and its partition with respect to $\lbrace f_1,f_2\rbrace$}
  \label{fig:part}%\space{-0.15in}
\end{figure*}

\begin{theorem}\label{th:alloc}
There is a routine ALLOC($g_1$, $\ldots$,$g_l$;$p$) which solves the Optimum Resource Allocation problem in $O(lp^2)$ time.
\end{theorem}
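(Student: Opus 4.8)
The plan is to realize the routine \textsc{ALLOC} by a standard left-to-right dynamic program over the processors. For $1\le j\le l$ and $0\le q\le p$, let $G_j(q)$ denote the maximum value of $\sum_{i=1}^{j} g_i(p_i)$ over all non-negative integer tuples $(p_1,\ldots,p_j)$ with $\sum_{i=1}^{j} p_i=q$; the quantity we ultimately want is $G_l(p)$ together with an allocation attaining it. Since all the values $g_i(p_i)$ are given as input, each such lookup costs $O(1)$ time.

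First I would record the recurrence. For the base case, $G_1(q)=g_1(q)$ for every $0\le q\le p$, and for $j\ge 2$,
\[
  G_j(q)=\max_{0\le r\le q}\bigl(G_{j-1}(q-r)+g_j(r)\bigr).
\]
The correctness of this recurrence is the optimal-substructure argument: if an optimal allocation to the first $j$ processors summing to $q$ assigns $r$ resources to processor $j$, then its restriction to the first $j-1$ processors must be optimal among allocations of $q-r$ resources to those processors, for otherwise one could replace that restriction by a better one without disturbing processor $j$. This yields $G_j(q)\ge G_{j-1}(q-r)+g_j(r)$ for the optimal choice of $r$; the reverse inequality is immediate, since any allocation of the first $j-1$ processors summing to $q-r$ extends to one of the first $j$ processors summing to $q$ by giving $r$ to processor $j$.

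Next I would describe the computation and bound its running time. We fill a table indexed by $(j,q)$ in increasing order of $j$: row $1$ is copied from $g_1$, and each later entry $G_j(q)$ is computed by scanning $r=0,1,\ldots,q$ and taking the maximum in the displayed recurrence, while also storing an argmax $r_j(q)$ for reconstruction. Computing one entry $G_j(q)$ costs $O(q+1)=O(p)$ time, and there are $l(p+1)$ entries, so the table is built in $O(lp^{2})$ time. An actual optimal allocation is then recovered in $O(l)$ additional time by back-tracing the stored pointers: set $q_l=p$, and for $j=l,l-1,\ldots,1$ put $p_j=r_j(q_j)$ and $q_{j-1}=q_j-p_j$. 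Bundling the table construction with the back-trace gives the routine \textsc{ALLOC}$(g_1,\ldots,g_l;p)$ with the claimed $O(lp^{2})$ bound.

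I do not expect a genuine obstacle here; the only points requiring care are the optimal-substructure justification of the recurrence and the bookkeeping (the pointers $r_j(q)$) needed so that \textsc{ALLOC} returns an allocation rather than merely the optimal value, since later sections invoke it to assemble the per-subtree solutions into a global placement for $T$.
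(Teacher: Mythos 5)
Your proof is correct. Note that the paper itself does not prove Theorem \ref{th:alloc} at all: it simply asserts that the statement ``implies from \cite{karush} by Karush'' and uses ALLOC as a black box, so your write-up fills in details the authors delegate to a citation. The argument you give --- the prefix dynamic program $G_j(q)=\max_{0\le r\le q}\bigl(G_{j-1}(q-r)+g_j(r)\bigr)$ with $l(p+1)$ table entries, $O(p)$ work per entry, and back-pointers for reconstructing the allocation --- is exactly the classical resource-allocation DP that underlies Karush's routine, and your optimal-substructure justification and the $O(lp^{2})$ accounting are sound. Your remark about storing the argmax pointers is also apt, since the paper's OPT routine needs not just the optimal value but an actual distribution of facilities to subtrees; the extra $O(l)$ back-trace you describe handles that without affecting the stated bound. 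In short: correct, elementary, and self-contained where the paper relies on an external reference.
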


In this context it is worth it to mention that Hakimi {\em et al.}~\cite{Hakimi_et_al} also have used a similar routine to solve The Maximum Coverage Location Problem. Now we show how to solve our problem on $T$ by using Theorem \ref{th:alloc}. Consider the subtrees as the processors and the facilities as the resources. Denote the maximum payoff of P2 from $i^{th}$ subtree for placing $p_i$ facilities by $\mu_i(p_i)$. We set $g_i(p_i)=\mu_i(p_i), l=|T(F)|$, and $p=k$. As the payoff of P2 from $T$ is the sum of the payoffs from individual subtrees our problem is reduced to the \textit{Optimum Resource Allocation} problem. 
% Now the maximum payoff of P2 from a subtree increases monotonically with increase in the number of facilities. Hence $\mu_i$ is monotone. Later we show that $\mu_i$ is also concave. 
Thus assuming all the values of $\mu_i(p_i)$ are known, by Theorem \ref{th:alloc} it follows that the \textit{Maximum Payoff Problem} on $T$ can be solved in $O((m+n)k^2)$ time. 

% \begin{description}
% \item[\textit{$ALLOC(g_1,\ldots,g_l;p)$}:] Given $g_1,\ldots,g_l$, $l$ monotone concave functions and $p$ a non-negative integer $ALLOC$ solves the following problem: $$\max \hspace{2mm}\sum_{i=1}^l g_i(p_i)$$ subject to, $$\sum_{i=1}^l p_i = p$$ $p_i$ is a non-negative integer. 
% \end{description}
% 
% To solve the problem for $T$ we define the following routine based on optimum resource allocation. Hakimi {\em et al.}~\cite{Hakimi_et_al} used a similar routine to solve The Maximum Coverage Location Problem.

Now we consider the problem on individual subtrees. A subtree which contains exactly one facility of P1, can be served by P2 totally by placing just one facility (see Figure $\ref{fig:part}(b)$). Now consider a subtree $T_i$ which contain at least $2$ facilities of P1. Let $\pi$ be the union of the paths of $T_i$ between the facilities of P1. Observe that $T_i \setminus \pi$ is a forest. Each tree $\lambda_j \in T_i \setminus \pi$ shares exactly one vertex with $\pi$, say $\alpha_j$. For example in Figure $\ref{fig:part}(b)$ the edge $(v_2,v_5)$ itself is such a tree and $v_2$ is the shared vertex. Note that as $\lambda_j$ does not contain any facility of P1 only one facility of P2 is sufficient to serve it totally. To be precise it is always advantageous for P2 to place a facility at $\alpha_j$ instead of placing it at any other points in $\lambda_j$. 
% Thus we have the following observation.
% 
% \begin{obs} 
% For any optimal placement by P2, suppose a facility of P2 has to be placed on some $\lambda_j$. Then same or better payoff of P2 can be realized by placing it at $\alpha_j$ instead of any other point of $\lambda_j$.
% \end{obs}
%For example in Figure \ref{fig:part}(b) it is better to place a facility at $v_2$, instead of at any other points on $(v_2,v_5)$.
Thus for any such $\lambda_j$ we add its weight to the weight of $\alpha_j$ and remove $\lambda_j$ from $T_i$. Note that now all the leaves of $T_i$ contain facilities of P1. We refer to this kind of subtree as \textit{bounded subtree}. Hence it is sufficient to solve our problem on \textit{bounded subtrees}.

\subsection{Maximum Payoff Problem on a Bounded Subtree}
In this subsection we consider a more general problem. To avoid intricate notations we reuse some notations from before. Let $T$=$(V,E)$ be any tree where all the leaves of $T$ are occupied by facilities of P1. Each vertex has a non-negative weight. With each edge $(v_i,v_j)$ of $T$ two non-negative real values $l_{ij}$ and $w_{ij}$ are associated, where $l_{ij}$ denotes the length of the edge and $w_{ij}$ denotes the weight of that edge. Note that if an edge $e$ is within service
zone of a player, then its payoff from $e$ is equal to the weight of $e$. On the other hand, in computation of distance between two points the lengths of edges are used and weights do not play any role in this context. Service zone and payoff are defined in the same manner like before. P2 is interested in placement of $k$ facilities on the points of $\Gamma \cup V$ such that its payoff is maximized, where $\Gamma$ is the set of points on $T$ as defined in Section \ref{sec:optongraph}. 

By Theorem $\ref{thm:optimality}$ it is sufficient to consider only points of $\Gamma \cup V$ to find an optimal placement for P2 on a bounded subtree. Then the only difference between \textit{Maximum Payoff Problem} and this general problem is that in \textit{Maximum Payoff Problem} the weight and length of any edge are considered to be same, but not in the general problem. Thus if we set the same value to $w_{ij}$ and $l_{ij}$ for any $(v_i,v_j)$, then solving the general problem would solve the \textit{Maximum Payoff Problem} on any bounded subtree. Henceforth we consider the general problem. 

% By Theorem $\ref{thm:optimality}$ it is sufficient to consider only points of $\Gamma \cup V$ on $T$ to find an optimal placement for P2. 
We propose a polynomial time algorithm for choosing $k$ optimal points from $\Gamma \cup V$. Now for each $v_i$ a point in $\Gamma(v_i)$ must lie on a path between the facility of P1 nearest to $v_i$ and another facility of P1. Thus $|\Gamma(v_i)|=O(m)$ and $
|\Gamma|=O(m|V|)$. For each point $p$ in $\Gamma \cup V$ we compute the bisectors with respect to the facilities of P1 assuming a facility of P2 is placed at $p$. Let $B$ be the set of all those bisectors. We consider the points of $\Gamma$ and $B$ also as vertices and the edges are added accordingly. As the bisectors are now vertices the service zone of any facility of P2 placed at a point of $\Gamma \cup V$ does not contain any edge partially. Let $V'=V \cup \Gamma \cup B$. Thus $V'$ is our new set of vertices. Now for each point $p \in \Gamma \cup V$ a bisector must lie on a path between $p$ and a facility of P1. Thus $O(m)$ such bisectors are possible. Hence $|B|=O(m^2|V|)$ and $|V'|=O(m^2|V|)$. 
% This construction is similar to the construction of $G'$ from $G$ in Section \ref{sec:approximation}. 
We choose an arbitrary vertex $r\in V$ to make it the root of $T$. 
%Here for simplicity we discuss a special version of this problem, where all the vertices are having zero weights. 
%***But we believe that using a simple extension it is possible to handle the case where vertices can have non-zero weights.%Now we will define a subroutine which recursively choose $\kappa$ optimal placements for P2. 

We design a routine OPT to compute the maximum payoff of P2 from $T$ for placing $k$ facilities. OPT selects $k$ vertices of $\Gamma \cup V$ in non-decreasing order of their distances from $r$ recursively. 
% This procedure selects a sequence of $\kappa$ vertices from $T$ in non-decreasing order of their distances from $v_r$. 
Suppose $v_j$ be the first vertex chosen by this routine. Let $\Gamma_{j}=T \setminus \Upsilon_{j}$, where $\Upsilon_{j}$ is the path between $r$ and $v_{j}$. As the further vertices are chosen in non-decreasing order no facilities could be placed on $\Upsilon_{j}$. Observe that $\Gamma_{j}$ is a forest. 
%Let $E_{ij}$ be the set of edges contained in the service zone of the facility of P2 placed at $v_{ij}$. Let $T_{ij}^l$, $1\leq l\leq\alpha$ be the trees of $\Gamma_{ij}$ and $V_{ij}^l$ be the set of vertices of $T_{ij}^l$.
We need to search the subtrees in $\Gamma_{j}$ to place the remaining $k-$1 facilities. Note that these subtrees are maximal in the sense that all of their leaves contain facilities of P1. At this stage we need a routine which can optimally distribute those $k-$1 facilities to these subtrees. To resolve this issue we use the ALLOC function. We can set $g_i$ to be the maximum payoff of P2 from the $i^{th}$ subtree like before. But observe that some vertices and edges of these subtrees might already be served by the facility at $v_j$. Thus we modify the subtree by changing the weights of those vertices and edges to zero. It is to be noted that though the weights of these edges are changed to zero, their lengths remain same. Also to ensure that facilities of P2 are placed in non-decreasing order of their distances to $r$ no facility could be placed at a vertex if its distance to $r$ is less than the distance between $r$ and $v_j$. Let $V^f$ be the forbidden set of vertices of the $i^{th}$ subtree where facilities can not be placed. Then we set $g_i(p_i)$ to be the maximum payoff of P2 from the modified $i^{th}$ subtree for placing $p_i$ facilities such that no facility is placed in the vertices of $V^f$. But note that ALLOC needs the values of $g_i(p_i)$ beforehand. Thus instead of calling the routine recursively on the subtrees we ensure that the payoff values of P2 from all of those maximal subtrees are already computed. Moreover, we need a storage space where we can store all those values for future usage.  

Any maximal subtree on which the routine is executed is uniquely identified by three parameters (i) its root (ii) a subset of its vertices and edges currently served by the existing facilities of P2 and (iii) a set of forbidden vertices. We refer to these maximal subtrees as \textit{auxilliary} subtrees. OPT takes an auxilliary subtree and an integer $p$ and returns the maximum payoff of P2 from that subtree for placing $p$ facilities such that no facilities are placed at the forbidden vertices. We maintain a table $M$ to store the values returned by OPT. Each row of $M$ corresponding to an auxilliary subtree. 
% an unique subtree specified by a maximal subtree of $T$ and two sets $V_r$ and $E_r$. Note that for any two maximal subtrees, they are either disjoint or one of them is contained in the other. Also we restrict the set of vertices based on distances, thus same maximal subtree with different $V_r$, say $V_r^1$ and $V_r^2$, either $V_r^1\subseteq V_r^2$ or $V_r^2\subseteq V_r^1$. Similarly, we make the weight of edges zero based on distances and thus for same maximal subtree with two different $E_r$ one of them must be contained in the other. Hence it is possible to enumerate all such unique subtrees based on containment relationship. Let $H_1,H_2,\ldots,H_q$ be one such enumeration of those subtrees. Here for any $i <j$, either $H_i$ and $H_j$ are disjoint or $H_i$ is a subtree of $H_j$. 
$M$ contains $k$ columns marked by 1 to $k$. The entry $M[T_i,p]$ stores the maximum payoff of P2 from the auxilliary subtree $T_i$ for placing $p$ facilities avoiding the forbidden vertices. Now we define the OPT routine for any auxilliary tree $T=(V',E')$ and an integer $p$. Let $V\subseteq V'$ be the set of vertices excluding any bisector which was originally not a vertex. 

\begin{description}
\item[$OPT(T,p)$:] Say $r$ be the root of $T$. Let $V^z$ and $E^z$ be the sets of vertices and edges currently served by the existing facilities of P2. Also let $V^f$ be the set of forbidden vertices of $T$ and $\{v_{1},v_{2},\ldots, v_{t}\}$ be the vertices in $V\setminus V^f$, sorted in non-decreasing order of their distances from the root of $T$. 

\qquad If $t \leq c$ for some constant $c$, return the maximum payoff of P2 by checking all possible valid $p$ placements by P2.

\qquad For each $1\leq j\leq t$ let $E_{j}$ and $V_j$ be the respective sets of edges and vertices of $T$ served by the facility of P2 placed at $v_{j}$. Let $\Gamma_{j}=T \setminus \Upsilon_{j}$, where $\Upsilon_{j}$ is the path between $r$ and $v_{j}$. Also let $|\Gamma_{j}|=l_j$. Say $r_{ij}$, $V_{ij}$ and $E_{ij}$ be the respective root, set of vertices and set of edges of the $i^{th}$ subtree of $\Gamma_{j}$ for $1\leq i\leq l_j$. Let $V_{ij}^f$ be the set of vertices in $i^{th}$ subtree of $\Gamma_{j}$ at a distance from $r$ which is lesser than the distance between $v_j$ and $r$. Suppose $T_{ij}$ be the auxilliary tree identified by (i) the root $r_{ij}$ (ii) the respective sets of vertices and edges $(V^z \cup V_j) \cap V_{ij}$ and $(E^z \cup E_j) \cap E_{ij}$ currently served by existing facilities of P2 (iii) the set of forbidden vertices $(V^f\cap V_{ij}) \cup V_{ij}^f$. Define $g_{ij}(p_{ij})$=$M[T_{ij}, p_{ij}]$, where $\sum_{i=1}^{l_j} p_{ij}$=$p-$1.

\qquad Let $Q_{j}$=ALLOC $(g_{1j}(p_{1j}),g_{2j}(p_{2j}),\ldots,g_{l_jj}(p_{l_jj});p-1)$+$W(E_{j}+V_j)$, where $W(E_{j}+V_j)$ is the sum of the weights of the edges and vertices in $E_{j}$ and $V_j$. Lastly, set $M[T,p] = \max_{1\leq j\leq t} Q_{j}$.
% Let $H_\iota$ be the unique subtree specified by $T$, $V_r$ and $E_r$, and $H_{j_i}$ be the unique subtree specified by $T_{j}^i$, $(V_r\cup\{v_{1}, v_{2},\ldots ,v_{j-1}\})\cap V_{j}^i$ and $E_r\cup E_{j}$. 
\end{description}

Now we show that OPT($T,p$) indeed compute the maximum payoff of P2. If $T$ contains at most constant number of vertices at which facilities could be placed OPT($T,p$) returns the maximum value by checking all possible combinations. Otherwise, for each $v_j$ a facility is placed at $v_j$ and the remaining $p-1$ facilities are placed in the auxilliary subtrees contained in $\Gamma_{j}$. These $p-1$ facilities are allocated to $l_j$ subtrees using a call to ALLOC. Now to argue that these facilities are placed in an optimal manner we need to consider two issues (i) a facility in $T_{ij}$ does not serve any point of $T_{i'j}$ for any $i\neq i'$ and (ii) all the values $M[T_{ij}, p_{ij}]$ are available beforehand. The following observation resolves the first issue.

\begin{figure*}
  \centering
  \includegraphics[width=42mm]
    {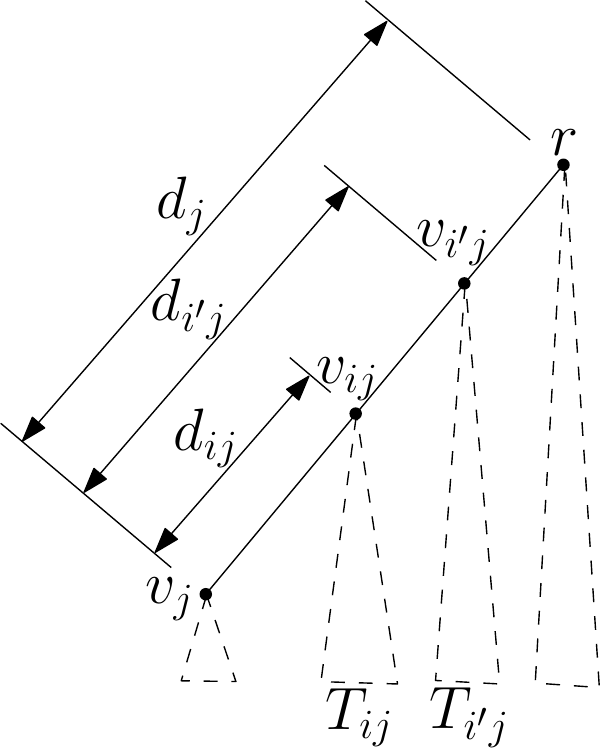}\\
    %\hspace{-0.25in}
  \caption{Independence of $T_{ij}$ and $T_{i'j}$}
  \label{fig:fig9}%\space{-0.15in}
%  \vspace{-0.2in}
\end{figure*}

\begin{obs}\label{obs:ind}
For any $i \neq i'$, placement of facilities of P2 in $T_{ij}$ and $T_{i'j}$ are independent of each other.
\end{obs}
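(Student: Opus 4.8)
The plan is to reduce the claim to a single distance inequality and then read independence off from it. Fix $j$ and abbreviate $v=v_j$, $\Upsilon=\Upsilon_j$. Since $T$ is a tree and $T_{ij}$ is a connected component of the forest $\Gamma_j=T\setminus\Upsilon$, there is a unique vertex $\alpha_{ij}\in\Upsilon$ adjacent to $T_{ij}$, and every path of $T$ from a point of $T_{ij}$ to a point outside $T_{ij}$ goes through $\alpha_{ij}$. Because $\alpha_{ij}$ lies on $\Upsilon$, the unique path of $T$ from $r$ to $v$, distances add along it: $d(r,v)=d(r,\alpha_{ij})+d(\alpha_{ij},v)$ and, for every point $s$ of $T_{ij}$, $d(r,s)=d(r,\alpha_{ij})+d(\alpha_{ij},s)$. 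A P2 facility may be placed at $s$ only when $d(r,s)\ge d(r,v)$, which is exactly the restriction the forbidden set $V_{ij}^{f}$ encodes; subtracting gives the basic estimate $d(s,\alpha_{ij})\ge d(\alpha_{ij},v)$.

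From this I claim that $d(s,q)\ge d(v,q)$ for every allowed P2 location $s$ in $T_{ij}$ and every point $q$ of $T_{i'j}$ with $i'\ne i$. Writing $\alpha'=\alpha_{i'j}$, the tree structure gives $d(v,q)=d(v,\alpha')+d(\alpha',q)$ and $d(s,q)=d(s,\alpha_{ij})+d(\alpha_{ij},\alpha')+d(\alpha',q)$, so it suffices to show $d(s,\alpha_{ij})+d(\alpha_{ij},\alpha')\ge d(v,\alpha')$. The three vertices $\alpha_{ij},\alpha',v$ all lie on $\Upsilon$, and a short case analysis on their order along $\Upsilon$ starting at $r$ (whether $\alpha_{ij}$ precedes $\alpha'$, follows it, or equals it) reduces the inequality in each case to the basic estimate plus additivity of distances along $\Upsilon$. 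The same computation also yields $d(s,w)\ge d(v,w)$ for every $w\in\Upsilon$, which is what makes it legitimate to credit the weight $W(E_{j}+V_j)$ that $v$ captures on $\Upsilon$ to $v$ alone.

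Independence then follows. Since $v$ is itself a P2 facility, if a point $q$ of $T_{i'j}$ is served by a facility other than $v$, that facility is strictly nearer to $q$ than $v$ and hence strictly nearer than $s$, so no P2 facility placed in $T_{ij}$ serves $q$; and if $s$ and $v$ tie for $q$, the tie goes to P2 with $q$ recorded as served by $v$ (its weight zeroed in the copy of $T_{i'j}$ handed to the recursion), so again $s$ draws nothing from $T_{i'j}$. Exchanging the roles of $i$ and $i'$ gives the symmetric statement, and therefore the P2 payoff obtained inside $\Gamma_j$ splits as the sum over subtrees of the payoff realised within each $T_{ij}$, which may be optimised independently; this is exactly the hypothesis needed to invoke ALLOC. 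I expect the delicate point to be not the distance estimates but the bookkeeping in the boundary case $d(r,s)=d(r,v)$ and, more generally, for points equidistant from two P2 facilities: there the non-decreasing-distance selection order and the convention that ties go to P2, with the already-placed facility taking the credit, must be applied consistently so that no payoff is double-counted or lost.
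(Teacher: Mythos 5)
Your proof is correct and follows essentially the same route as the paper's: both arguments rest on the facts that the attaching vertices of the subtrees of $\Gamma_j$ lie on the $r$--$v_j$ path and that the forbidden-vertex (non-decreasing distance) restriction forces any allowed facility $s$ in $T_{ij}$ to satisfy $d(s,\alpha_{ij})\geq d(\alpha_{ij},v_j)$, whence $v_j$ is at least as close as $s$ to every point of $T_{i'j}$. Your write-up is if anything slightly more careful, since it covers uniformly the case where a subtree attaches at $v_j$ itself (which the paper treats separately) and makes explicit the tie/bookkeeping convention that the paper's concluding ``similarly'' leaves implicit.
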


\begin{proof}
Let $v_{tj}$ be the root of $T_{tj}$, where $1\leq t\leq l_j$. Consider any subtree $T_{ij}$ such that $v_{ij}$=$v_{j}$, then the service zone of any facility of P2 in $T_{ij}$ except the one at $v_j$ is limited within $T_{ij}$. Moreover, as $T_{ij}$ is connected to other subtrees through $v_{ij}$ facilities of P2 in other subtrees do not get any payoff from $T_{ij}$. Now consider two subtrees $T_{ij}$ and $T_{i'j}$ such that $v_{ij}\neq v_{j}$ and $v_{i'j}\neq v_{j}$. Let $d_{tj}$ be the distance between $v_{j}$ and $v_{tj}$ for all $t$. Also let $d_{j}$ be the distance between $r$ and $v_{j}$. Without loss of generality we assume $d_{ij} < d_{i'j}$. As all the root of the subtrees in $\Gamma_j$ are lying on the $rv_j$ path $d_{j} \geq d_{i'j}$ (see Figure $\ref{fig:fig9}$). Now the distance between any facility of P2 in $T_{i'j}$ and $v_{ij}$ is at least $d_{i'j} + (d_{i'j} - d_{ij})$ as no facility can be placed in $T_{i'j}$ within a distance $d_{i'j}$ from $v_{i'j}$. Now $d_{i'j} + (d_{i'j} - d_{ij})>d_{ij}$. Hence the facility at $v_{j}$ is closest to $v_{ij}$ than any other facilities in $T_{i'j}$. Hence any facility placed at $T_{i'j}$ does not get any payoff from $T_{ij}$. Similarly, any facility placed at $T_{ij}$ does not get any payoff from $T_{i'j}$ which completes the proof of this observation.
\end{proof}

% maximum payoff of P2 from the unique tree $H_{\iota}$ for placing $\delta$ facilities (see Figure \ref{fig:fig12}). 
Considering the second issue we enumerate the auxilliary subtrees in a way such that all the entries of $M$ needed by OPT($T,p$) are computed beforehand. We note that if the entries corresponding to a subtree $T'$ is needed while running OPT($T,p$), $T'$ must be a proper subtree of $T$, as $T'$ is a tree in $\Gamma_{j}$ which is obtained by deleting at least one edge of $T$. Thus it is sufficient to enumerate the auxilliary subtrees based on subtree containment relationship. In this ordering if $T'$ is contained in $T$, then $T'$ appears before $T$. We order the rows of $M$ in this manner. $M$ is filled up from top row to bottom row and in a fixed row from left to right. Hence we have the following lemma.

\begin{lemma}
OPT($T,p$) computes the maximum payoff of P2 from auxilliary tree $T$ for placing $p$ facilities such that no facilities are placed at the forbidden vertices.
\end{lemma}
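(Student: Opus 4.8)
The plan is to prove the lemma by induction on the auxiliary subtrees ordered by the containment relation described just above the statement (equivalently, by induction on the number $t$ of placeable vertices). Throughout, write $\beta(T,p)$ for the true maximum payoff of P2 from the auxiliary tree $T$ for placing $p$ facilities while avoiding its forbidden vertices; the goal is $M[T,p]=\beta(T,p)$. For the base case, when $t\le c$, the routine enumerates every valid placement of $p$ facilities among those finitely many vertices and returns the best, so the claim is immediate. For the inductive step, fix an auxiliary tree $T$ with root $r$, currently-served sets $V^z,E^z$, forbidden set $V^f$, and $t>c$, and assume the lemma holds for every auxiliary tree properly contained in $T$; all such trees precede $T$ in the ordering, hence their rows of $M$ are already filled.

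First I would establish the easy inequality $M[T,p]\le\beta(T,p)$. For each $j$ and each valid split $\sum_{i}p_{ij}=p-1$, the quantity $\mathrm{ALLOC}(g_{1j},\ldots,g_{l_jj};p-1)+W(E_j+V_j)$ is, using the inductive hypothesis to interpret each $g_{ij}(p_{ij})=M[T_{ij},p_{ij}]$ as the optimum of an honest subproblem, realised by an actual placement of $p$ facilities in $T$ (one at $v_j$, and the $M$-optimal placements inside the $T_{ij}$). Hence each $Q_j$ is at most $\beta(T,p)$, and taking the maximum over $j$ preserves this, so $M[T,p]=\max_{j}Q_j\le\beta(T,p)$.

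For the matching lower bound I would take an optimal placement $S^\star$ of $p$ facilities in $T$ avoiding $V^f$. By Theorem~\ref{thm:optimality} we may assume $S^\star\subseteq\Gamma\cup V$, so every facility of $S^\star$ sits at a placeable vertex. Let $v_j$ be a facility of $S^\star$ of minimum distance from $r$, ties broken consistently with the order used by OPT. Every other facility of $S^\star$ lies in the forest $\Gamma_j=T\setminus\Upsilon_j$: it is not an interior vertex of $\Upsilon_j$, since those are strictly closer to $r$ than $v_j$, and it is not $v_j$ itself, which already carries a facility. Distributing the remaining $p-1$ facilities over the subtrees $T_1,\ldots,T_{l_j}$ of $\Gamma_j$ gives a split $\sum_i p_{ij}=p-1$. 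The crucial point, supplied by Observation~\ref{obs:ind}, is that the service zones in distinct subtrees of $\Gamma_j$ do not interfere: a facility in $T_i$ serves nothing outside $T_i$, and because no facility of $S^\star$ other than the one at $v_j$ is closer to $r$ than $v_j$, the part of $v_j$'s service zone lying in each $T_i$ is exactly the sets $(V^z\cup V_j)\cap V_{ij}$ and $(E^z\cup E_j)\cap E_{ij}$ hard-wired into the definition of $T_{ij}$, while the "no facility closer to $r$ than $v_j$" constraint is exactly the forbidden set $(V^f\cap V_{ij})\cup V_{ij}^f$. Hence the payoff of $S^\star$ decomposes as $W(E_j+V_j)+\sum_i\bigl(\text{payoff of }S^\star\cap T_i\text{ inside }T_{ij}\bigr)$, and each summand is at most $M[T_{ij},p_{ij}]$ by the inductive hypothesis. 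Since $\mathrm{ALLOC}$ (Theorem~\ref{th:alloc}) returns $\max_{\sum p_{ij}=p-1}\sum_i M[T_{ij},p_{ij}]$, we get $Q_j\ge\beta(T,p)$ for this particular $j$, so $M[T,p]\ge\beta(T,p)$. Combining the two inequalities yields $M[T,p]=\beta(T,p)$, completing the induction.

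The step I expect to be the most delicate is the payoff decomposition of $S^\star$ across the subtrees of $\Gamma_j$: one must simultaneously verify that (i) facilities in different subtrees never compete for the same point, (ii) the slice of $v_j$'s zone inside each subtree is independent of where the other $p-1$ facilities are placed, and (iii) the weight-zeroing and the forbidden-vertex bookkeeping in the construction of $T_{ij}$ match exactly the constraints the global problem imposes on that subtree, including the correct handling of facilities placed at distance exactly $d(r,v_j)$. All three follow from processing facilities in non-decreasing order of distance from $r$ together with Observation~\ref{obs:ind}, but fusing them into a single clean identity is where the care is required.
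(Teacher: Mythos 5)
Your proposal is correct and follows essentially the same route as the paper: the paper's justification for this lemma is precisely the surrounding discussion (base case by exhaustive checking, Observation~\ref{obs:ind} for independence of the subtrees of $\Gamma_j$, and the containment ordering of auxiliary subtrees guaranteeing the needed entries of $M$ exist), which you have merely recast as a formal induction with the two inequalities. Your extra step of decomposing an optimal placement via its facility closest to $r$ is exactly the intended rationale behind OPT's non-decreasing-distance selection, so there is no substantive difference in approach.
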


To compute the maximum payoff of P2 for the original tree $T$ corresponding to the general problem we make a call to $OPT(T,k)$ where there is no existing facility of P2 and the set of forbidden vertices is empty. Then the last entry of the last row of $M$ gives the desired value.

Now we consider the time complexity of the algorithm which is precisely the product of the number of entries of $M$ and the time complexity of computing each entry. The time complexity of computation of an entry is dominated by the complexity of $t=O(|\Gamma \cup V|)=O(m|V|)$ calls to ALLOC. 
% To apply Theorem \ref{th:alloc} for bounding the complexity of ALLOC we need to prove that the function $g_{ij}$ is monotone and concave for $1\leq i\leq l_j$. As the payoff of P2 on any subtree increases monotonically with the increment of the number of facilities $g_{ij}$ is monotone. The next observation proves the concavity of $g_{ij}$. 
% 
% \begin{obs}
% $g_{ij}$ is a concave function for $1\leq i\leq l_j$.\label{mu_concave}
% \end{obs}
% 
% \begin{proof}
% Note that it is sufficient to show that for any positive integer $1 \leq p \leq p_{ij}$, $g_{ij}(p) - g_{ij}(p - 1) \geq g_{ij}(p+1) - g_{ij}(p)$. For the sake of contradiction say $g_{ij}(p) - g_{ij}(p - 1) < g_{ij}(p+1) - g_{ij}(p)$ for some $p$. Now say $s_1,s_2,\ldots,s_{p+1}$ be the ordered sequence of optimal placements of $p+1$ facilities. Suppose we drop the $p^{th}$ facility. Then the change in payoff of P2 is $g_{ij}(p+1) - (g_{ij}(p)-g_{ij}(p-1)) > g_{ij}(p+1) - (g_{ij}(p+1)-g_{ij}(p)) = g_{ij}(p)$. Hence even after dropping a facility, payoff of P2 is strictly greater than $g_{ij}(p)$. But note that this is a placement of sipe $p$ with payoff of P2 greater than $g_{ij}(p)$, which contradicts the maximality of $g_{ij}(p)$. Hence the lemma follows.
% \end{proof}
By Theorem \ref{th:alloc} each call to ALLOC needs $O(l_jp^2)=O(|V'|k^2)=O(m^2k^2|V|)$ time. Thus the total time needed is $O(m^3k^2{|V|}^2)$ to compute each entry. Now the number of entries in $M$ is the product of number of distinct auxilliary trees and size of each row ($k$). Recall that an auxilliary tree is uniquely identified by its root $r'$, a subset $U^z$ of its vertices and edges currently served by the existing facilities of P2 and a set $V^f$ of forbidden vertices. The number of distinct $r'$ is $O(|V'|)$. The way the set $U^z$ is constructed it depends on the distance of $v_j$ and $r$. Thus for a subtree with fixed root the number of distinct $U^z$ is bounded by the number of distinct distances. Now $v_j$ always belong to $\Gamma \cup V$. Thus the number of such distinct distances is $O(|\Gamma \cup V||V'|)=O(m^3{|V|}^2)$. As the set $V^f$ is also constructed based on distance the number of such $V^f$ is also $O(m^3{|V|}^2)$. Hence the number of distinct auxilliary trees is bounded by $O(m^2|V|)*O(m^3{|V|}^2)*O(m^3{|V|}^2)=O(m^8{|V|}^5)$. Thus our algorithm runs in $O(m^{11}{|V|}^7k^2)$ time and we have the following lemma.

\begin{lemma}\label{lem:bounded}
The \textit{Maximum Payoff Problem} on a bounded subtree can be solved in $O(m^{11}{|V|}^7k^2)$ time. 
\end{lemma}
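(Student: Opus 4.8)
The plan is to prove the lemma in two parts: first correctness of the routine $OPT(T,k)$, and then a careful accounting of its running time. For correctness I would invoke the preceding lemma (that $OPT(T,p)$ returns the maximum payoff of P2 from an auxilliary tree $T$ when $p$ facilities are placed avoiding the forbidden vertices), which itself rests on Observation~\ref{obs:ind} (independence of the subtrees $T_{ij}$), on Theorem~\ref{thm:optimality} (so that restricting attention to $\Gamma\cup V$ loses nothing), and on Theorem~\ref{th:alloc} (so that the remaining $p-1$ facilities are distributed optimally among the subtrees of $\Gamma_j$). The one point still to spell out is the fill order of the table $M$: every auxilliary tree queried inside $OPT(T,p)$ is a proper subtree of $T$ obtained by deleting at least one edge, so ordering the rows of $M$ by the subtree-containment relation guarantees that all entries $M[T_{ij},p_{ij}]$ needed by a call are already present; a single top-to-bottom, left-to-right sweep then fills $M$ correctly and the answer is read off the bottom-right entry.

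For the running time I would first pin down $|V'|$ for $V'=V\cup\Gamma\cup B$. Each point of $\Gamma(v_i)$ lies on a path from the P1-facility nearest $v_i$ to another P1-facility, so $|\Gamma(v_i)|=O(m)$ and $|\Gamma|=O(m|V|)$; each point of $\Gamma\cup V$ contributes $O(m)$ bisectors, one per path to a P1-facility, so $|B|=O(m^2|V|)$ and $|V'|=O(m^2|V|)$. Next I would count auxilliary trees. Such a tree is determined by its root, the set $U^z$ of currently-served vertices and edges, and the forbidden set $V^f$; the roots range over $V'$, giving $O(m^2|V|)$ choices. The crucial point — and the step I expect to be the main obstacle — is that, for a fixed root, both $U^z$ and $V^f$ are governed by a single real parameter, namely the distance $d_j$ from the global root $r$ to the vertex $v_j$ where the parent facility was placed: $V^f$ is exactly the set of subtree vertices within distance $d_j$ of $r$, and $U^z$ is likewise cut out by that distance together with the placement at $v_j$. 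Since $v_j\in\Gamma\cup V$ and the subtree root lies in $V'$, the number of distinct such distances is $O(|\Gamma\cup V|\,|V'|)=O(m^3|V|^2)$, so $U^z$ and $V^f$ each range over $O(m^3|V|^2)$ possibilities, giving $O(m^2|V|)\cdot O(m^3|V|^2)\cdot O(m^3|V|^2)=O(m^8|V|^5)$ auxilliary trees and hence $O(m^8|V|^5k)$ entries in $M$.

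Finally I would bound the cost of a single entry $M[T,p]$. The base case $t\le c$ is handled by brute force over a constant number of slots and is dominated by the recursive case; there the work is dominated by the $t=O(|\Gamma\cup V|)=O(m|V|)$ calls to ALLOC, one per candidate first vertex $v_j$, each costing $O(l_jp^2)=O(|V'|k^2)=O(m^2|V|k^2)$ by Theorem~\ref{th:alloc}, so an entry costs $O(m|V|\cdot m^2|V|k^2)=O(m^3|V|^2k^2)$ (the bisectors and service-zone data for each $v_j$ being precomputed once, within this budget). Multiplying the number of entries by the per-entry cost yields $O(m^8|V|^5k)\cdot O(m^3|V|^2k^2)=O(m^{11}|V|^7k^2)$. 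I would close by noting that the original problem is solved by calling $OPT(T,k)$ with no existing P2-facility and empty forbidden set, so this bound is exactly the time to solve the \textit{Maximum Payoff Problem} on a bounded subtree, which completes the proof.
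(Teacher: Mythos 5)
Your proposal matches the paper's own argument essentially step for step: the same table-filling correctness recap via the subtree-containment order, the same bounds $|\Gamma|=O(m|V|)$, $|V'|=O(m^2|V|)$, the same counting of auxilliary trees by root, served set, and forbidden set via distinct distances ($O(m^8|V|^5)$ trees), and the same per-entry cost of $O(m|V|)$ ALLOC calls at $O(m^2|V|k^2)$ each. Even your final multiplication (entries $O(m^8|V|^5k)$ times per-entry cost $O(m^3|V|^2k^2)$ stated as $O(m^{11}|V|^7k^2)$) reproduces the paper's own accounting, so nothing further is needed.
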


Using Lemma \ref{lem:bounded} the total time needed to compute the table $M$ for all bounded trees is $O((m+n)m^{11}{n}^7k^2)$. Thus the maximum payoff of P2 from $T$ can be computed in $O((m+n)m^{11}{n}^7k^2)$ time which completes the proof of Theorem \ref{thm:opt_tree}.

% Observations $\ref{obs:obs1}$ and $\ref{obs:obs2}$ ensure that the number of unique subtrees is $O(n^3\omega^2)$. Thus the general problem on any tree $T$ with facilities of P1 on its leaves can be solved in $O(n^5\omega^2\kappa)$ time. We note that the {\it Maximum Payoff Problem} on $T_i$ is same as the general problem mentioned before. Thus $OPT$ can be used to solve the problem on a bounded subtree by plugging in $T_i$ and $k$ in place of $T$ and $\kappa$. Hence on any bounded subtree $T_i$ the maximum payoff of P2 can be computed in $O(n_{i}^5\omega_{i}^2k)$ time, where $n_i$ is the number of vertices and $\omega_i$ is the sum of the weights of the edges of $T_i$. This solves the Maximum Payoff Problem on the original tree and this completes the proof of Theorem \ref{thm:opt_tree}.

% % % % End of section 4 % % %

% % %%%%%Start of section 5 edited %%%%% % %
% % % % % Section 5 % % % % % % %

\section{Computational Complexity of the Maximum Payoff Problem}\label{sec:com_comx}%\vspace{-0.1in}

This section is devoted to address the computational complexity of the {\it Maximum Payoff Problem} on graphs. To be precise we show that existence of a polynomial time algorithm for this problem is unlikely unless $\mathcal{P}\neq \mathcal{NP}$. To set up the stage, first we define the {\em decision} version of the {\it Maximum Payoff problem}.\\\\
{\it INSTANCE:} Graph $G$=$(V,E)$, a set of $m$ points $F$ on $G$, a positive real number $\delta$ and a positive integer $k$.\\
{\it QUESTION:} Does there exist a set of $k$ points $S$ on the graph $G$ such that $\mathcal{Q}_2(F,S) \geq \delta$?\\ 

Note that given a certificate for this problem consists of $G$, $F$, $S$ and $\delta$ we can verify whether the payoff of P2 is at least $\delta$ or not in polynomial time. So the problem is in $\mathcal{NP}$.  
%We proved in Theorem~\ref{tm:optimality} that the set of optimal locations of size $k$ by P2 is a subset of $\Gamma \cup V$.  
%It is also possible to find out whether $\mathcal{Q}_2(F,S) \geq \delta$
%in polynomial time. So the problem is in $\mathcal{NP}$. 
In the remaining part of this section we show a reduction from {\it Dominating Set Problem} to this problem. As {\it Dominating Set Problem} is known to be $\mathcal{NP}$-hard \cite{gareynjohnson}, this implies that the decision version of the {\em Maximum Payoff Problem} is also $\mathcal{NP}$-hard. Let us begin our discussion by defining {\it Dominating Set} of a graph.\\
{ DOMINATING SET:} Given a graph $G$=$(V,E)$ a \textit{dominating set} is a set of vertices $S \subseteq V$ 
such that each vertex in $G$ is either in $S$ or is a neighbor of at least one vertex in $S$.

The {\it Dominating Set Problem} is as follows.\\\\
{DOMINATING SET PROBLEM}\\
{\it INSTANCE:} Graph $G$=$(V,E)$, positive integer $k \leq |V|$.\\
{\it QUESTION:} Is there a dominating set of size $k$ in $G$?\\\\
The following Theorem proves the $\mathcal{NP}$-completeness of the decision version of the {\it Maximum Payoff Problem}.

\begin{theorem}
The decision version of the Maximum Payoff Problem is $\mathcal{NP}$-complete.\label{th:nphard}
\end{theorem}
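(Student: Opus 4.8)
The plan is to establish $\mathcal{NP}$-hardness by a polynomial-time reduction from \textsc{Dominating Set}, since membership in $\mathcal{NP}$ has already been argued. Given an instance $(G,k)$ of \textsc{Dominating Set} with $G=(V,E)$ and $n=|V|$, I would construct a weighted graph $G'$ for the \textit{Maximum Payoff Problem} together with a placement $F$ of P1's facilities, a threshold $\delta$, and the same parameter $k$, so that $G$ has a dominating set of size $k$ if and only if P2 can achieve payoff at least $\delta$ in $G'$ against $F$.

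The construction I would use: start from $G$, assign every original vertex of $V$ a large weight $W$ (say $W = n+2$ or similar, chosen so that capturing vertices dominates the payoff) and every original edge weight $0$ (or a negligible value), and keep all edge lengths equal to $1$. Then, to force P1 to ``pre-own'' everything, attach to each vertex $v_i \in V$ a pendant gadget — e.g.\ a new vertex $f_i$ joined to $v_i$ by an edge of length $2$ (or length slightly more than $1$), and place one of P1's facilities at each $f_i$, so $m = n$. With this choice, the facility $f_i$ sits at distance $2$ from $v_i$ and at distance $\geq 3$ from any neighbor of $v_i$ in $G$. Consequently, if P2 places a facility at a vertex $s \in V$, then $s$ is at distance $0$ from itself and distance $1$ from each neighbor, so P2 captures $s$ and all of its $G$-neighbors (the tie-breaking rule favoring P2 handles the boundary case where distances coincide with P1's). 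Thus the set of original vertices captured by a size-$k$ placement $S \subseteq V$ of P2 is exactly the set of vertices dominated by $S$ in $G$. Setting $\delta = nW$ (the total weight of all original vertices, modulo the negligible edge contributions) makes ``$\mathcal{Q}_2(F,S) \geq \delta$'' equivalent to ``$S$ dominates all of $V$.''

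The key steps, in order, are: (1) describe $G'$, the weights, the lengths, and $F$ precisely, and verify the construction is polynomial in the size of $(G,k)$; (2) invoke Theorem~\ref{thm:optimality} to argue that an optimal (or threshold-meeting) placement for P2 may be taken inside $\Gamma \cup V$, and then argue further that within the gadget construction one may assume P2's facilities lie on the original vertices $V$ — any facility placed elsewhere captures a subset of what the ``nearest'' original vertex would capture, because the heavy weight is concentrated on $V$ and pendant facilities of P1 block everything else; (3) prove the forward direction: a dominating set $D$ of size $k$, used as P2's placement, captures every $v_i$ (each is within distance $1$ of some $s\in D$ while $f_i$ is at distance $\geq 2$), giving payoff $\geq nW = \delta$; (4) prove the reverse direction: if P2 attains payoff $\geq \delta$, then by the weight accounting P2 must capture every original vertex, hence (after the normalization in step 2) the chosen vertices form a dominating set of size $\le k$. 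Finally I would note that all weights and lengths are polynomially bounded integers, so the reduction in fact yields strong $\mathcal{NP}$-completeness.

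The main obstacle I anticipate is step (2) — cleanly justifying that P2 gains nothing by placing facilities on edge interiors or on the pendant edges rather than on vertices of $V$. One has to rule out a facility straddling two original vertices and ``half-capturing'' edges, and a facility placed very close to some $f_i$ to steal that pendant edge; the fix is to choose the edge weights of $G'$ small enough (e.g.\ $0$, or $\le 1/n$) that the entire edge contribution is strictly less than a single vertex weight $W$, so no combination of edge-fragments can compensate for missing an original vertex, and to choose the pendant lengths (e.g.\ $2$ vs.\ the unit length of original edges) so that a vertex-placed facility of P2 strictly dominates, in the capture sense, any nearby non-vertex placement. Getting these inequalities to line up — while keeping all quantities integral and polynomially bounded for the \emph{strong} $\mathcal{NP}$-completeness claim — is the delicate part; the rest is bookkeeping.
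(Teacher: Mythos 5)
Your overall plan---reduce from \textsc{Dominating Set} by attaching a pendant P1 facility to every vertex and asking whether P2 can capture all original vertices---is the same as the paper's, but as written the reduction has genuine gaps. First, your instance is not an instance of the \emph{Maximum Payoff Problem} as defined here: in this model an edge has a single weight $w(e)$ which is simultaneously its length and its payoff contribution (Section \ref{sec:prob_def}), so you cannot give original edges weight $0$ but length $1$ and pendant edges length $2$; weight and length are only decoupled later, for the auxiliary problem on bounded subtrees. The paper instead gives every edge the same tiny weight $w_e < \frac{1}{|V|+|E|+k}$ and every vertex weight $1$, so the metric is uniform while the total edge payoff is less than one vertex. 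Second, your primary parameter choice (pendant length $2$, original edges length $1$) is actually broken by the tie-breaking rule that awards equidistant points to P2: a P2 facility placed anywhere in the interior of an original edge $(v_i,v_j)$, at distance $t\le 1$ from $v_i$, is within distance $t+1\le 2$ of every neighbor of $v_i$ (and symmetrically for $v_j$), i.e.\ no farther than that neighbor's nearest P1 facility, so it captures all of $N[v_i]\cup N[v_j]$. On a path on four vertices with $k=1$, a single facility in the middle edge then captures all four vertices even though the domination number is $2$, so the reverse direction of your claimed equivalence fails. Your parenthetical alternative (pendant length strictly between $1$ and $2$) avoids this, but you would still need to reconcile it with weight$=$length and with the integrality you want for strong $\mathcal{NP}$-completeness.

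Third, the step you defer as ``the main obstacle'' (your step (2)) is exactly where the proof's substance lies, and the paper carries it out explicitly: by Theorem \ref{thm:optimality} one may assume $S\subseteq\Gamma\cup V$; since all edge weights are equal, $\Gamma$ consists only of the points sitting just beside P1's pendant facilities, so each facility of P2 is either at a vertex of $V$ or can be relocated to the adjacent original vertex (or any unused vertex) at a loss of less than $kw_e$; then splitting the payoff into its vertex part $\mathcal{Q}_V$ and edge part $\mathcal{Q}_{E'}\le(|V|+|E|)w_e$ and using the bound on $w_e$ together with integrality of $\mathcal{Q}_V$ forces P2 to capture every vertex of $V$, i.e.\ the relocated set is a dominating set of size $k$. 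Until you supply an argument of this kind, with parameters immune to the tie-break phenomenon above and stated inside the problem's actual weight model, the hardness direction of your reduction is not established.
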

\vspace{-0.2in}
\begin{figure*}[h]
 \centering
  \begin{minipage}[c]{0.5\textwidth}
  \centering
  \includegraphics[width=50mm]
    {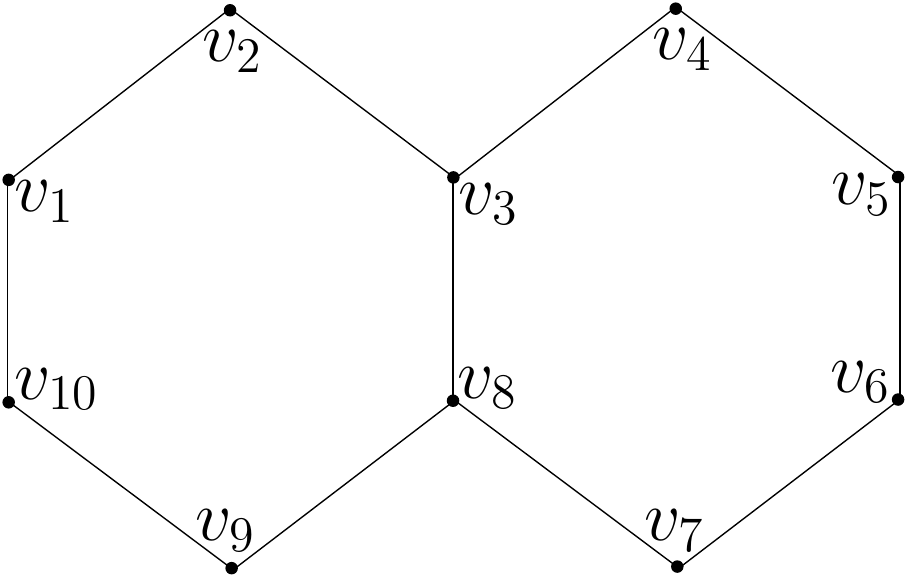}\\
    {$G$}\\
    \end{minipage}%
    %\hspace{-0.25in}
    \begin{minipage}[c]{0.5\textwidth}
    \centering
  \includegraphics[width=55mm]
    {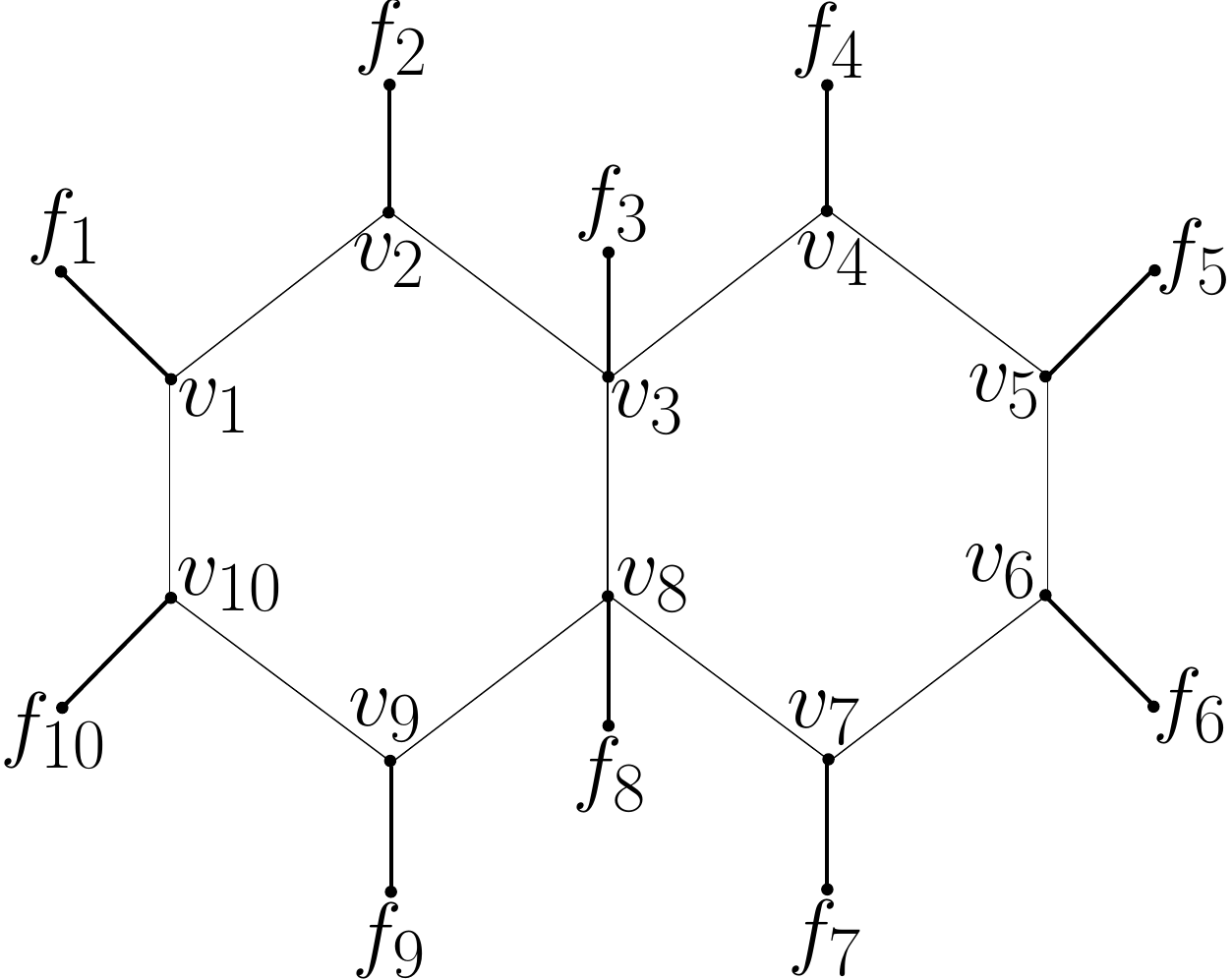}\\
    {$G'$}\\
  \end{minipage}
    %\hspace{-0.25in}
  \caption{Construction of $G'$ from an example graph $G$}
  \label{fig:fig7}%\space{-0.15in}
  \vspace{-0.2in}
\end{figure*}
\begin{proof}
It is already shown that the decision version of the {\it Maximum Payoff Problem} is in $\mathcal{NP}$. Now we show a reduction from {\it Dominating Set Problem} to this problem. Let $\mathcal{I}$=$(G,k)$ be any valid instance of {\it Dominating Set Problem}, where $G$=$(V,E)$ is an unweighted graph and $k$ is an integer. We construct a new weighted graph $G'$=$(V',E')$ from $G$ by adding a pendant vertex to each of the vertices. The construction for an example graph is shown in Figure \ref{fig:fig7}. Let $\tilde{F}$ be the set of $|V|$ new vertices. Define $V'$ = $V \cup \tilde{F}$ and $E'$ = $E \cup \{(v_i,f_i)$ $| \forall v_i \in V\}$. Assign
weight $w_e < \frac{1}{|V|+|E|+k}$ to each edge $e \in E'$ and weight $w_v=1$ to each vertex $v\in V'$. Now consider the decision version of the {\it Maximum Payoff Problem} on $G'$, where each of the points in $\tilde{F}$ contains a facility of P1. We claim that there exists a dominating set of size $k$ in $G$ if and only if there exists a set $S$ of $k$ points in $G'$ such that $\mathcal{Q}_2(\tilde{F},S)\geq |V|$. 

Consider the forward direction at first. Suppose $G$ has a dominating set $D$ of size $k$. In graph $G'$, $D$ 
can be used for placement by $P2$. Note that every vertex in $V$ is adjacent to one of the vertices of $D$. So the payoff of $P2$ is at least $|V|$.

Now consider the reverse direction. Suppose $S$ be a set of $k$ points in $G'$ such that $\mathcal{Q}_2(\tilde{F},S)\geq|V|$. Now using a construction similar in the proof of Theorem \ref{thm:optimality} we can construct a placement $S_1$ such that $S_1 \subseteq \Gamma\cup V$. Thus without loss of generality we assume $S\subseteq \Gamma\cup V$. Recall that for each edge $(f_i,v_i)$ there exists a point $p_i$ very close to $f_i$ such that distance between $p_i$ and $f_i$ is small enough to be considered as zero. Denote the set of all such points as $P$. Now observe that as weight of each edge is same $\Gamma \subseteq P\cup V$. Hence $S\subseteq P\cup V$. Now we construct a new set of placements $S'$ from $S$ in the following way. For all points $s_i\in S$, such 
that $s_i\in V$, add $s_i$ to $S'$. For all points $s_i\in S$ such that $s_i\in P$, let $s_i \in (f_i,v_j)$. Add $v_j$ to $S'$ if $v_j\notin S$, else add any vertex $v\in V$ to $S'$ such that $v\notin S$ (see Figure \ref{fig:shiftvertex}). Observe $S'\subset V$ and $\mathcal{Q}_2(\tilde{F},S')>\mathcal{Q}_2(\tilde{F},S)-kw_e$. We show that $S'$ is a dominating set. Note that the payoff $\mathcal{Q}_2(\tilde{F},S')$ can be written as $\mathcal{Q}_{E'}+\mathcal{Q}_V$, where $\mathcal{Q}_{E'}$ and $\mathcal{Q}_V$ are the sum of the weights of the respective edges and vertices in service zone of P2 corresponding to the placement $S$. Observe that $\mathcal{Q}_{E'}\leq (|V|+|E|)w_e$. Hence $\mathcal{Q}_V > \mathcal{Q}_2(\tilde{F},S)-kw_e-(|V|+|E|)w_e > \mathcal{Q}_2(\tilde{F},S)-(|V|+|E|+k)w_e$. But recall that $w_e < \frac{1}{|V|+|E|+k}$, thus $\mathcal{Q}_V > \mathcal{Q}_2(\tilde{F},S)-$1. Now the assumption was that $\mathcal{Q}_2(\tilde{F},S)\geq|V|$, which implies $\mathcal{Q}_V > |V|-$1. As $\mathcal{Q}_
V$ 
is always an integer $\mathcal{Q}_V\geq |V|$. Thus P2 serves all the vertices of $V$. Now any vertex $v_i\in V$ will be served by a facility $s_j\in S'$ if and only if $s_j$ is neighbor of $v_i$. Hence $S'$ is a dominating set of $G$ of size $k$, which completes the proof of this theorem.
\end{proof}
\begin{figure}[ht]
\centering
\includegraphics[height=10mm]{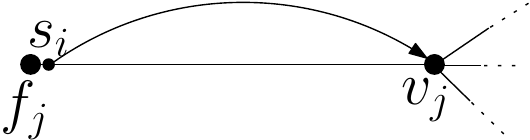}
\caption{Formation of $S'$ from $S$ in proof of Theorem \ref{th:nphard}} 
\label{fig:shiftvertex}
\end{figure}

Note that scaling of the weights of the edges and the vertices by same amount does not change the relative payoffs of P1 and P2. Thus the proof of $\mathcal{NP}$-completeness still holds if we scale up the weights of edges and vertices of the graph used in our construction by a factor of $|V|+|E|+k$. Now the weight of any edge is at most $1$ and the weight of any vertex is $|V|+|E|+k$. Thus the problem remains $\mathcal{NP}$-complete even if the weights of the graph is bounded by a polynomial in the length of the input. Hence the decision version of the Maximum Payoff Problem is strongly $\mathcal{NP}$-complete indeed.

% % % % End of section 5 % % %

% % %%%%%Start of section 6 edited %%%%% % %
% % % % % Section 6 % % % % % % %

\section{Approximation Bound for the Maximum Payoff Problem on Graphs}\label{sec:approximation}\vspace{0.1in}
In this section we discuss an $1-\frac{1}{e}$ factor approximation algorithm for the {\it Maximum Payoff Problem}. We show
a construction for generating an instance of the {\em Weighted Maximum Coverage Problem} from an instance of the {\em Maximum Payoff Problem} in polynomial time
%will provide an 1.58 factor approximation algorithm for {\it Maximum payoff problem}. 
%We reduce it to {\it Weighted Maximum Coverage Problem} 
and use the existing approximation algorithm for the {\it Weighted Maximum Coverage Problem} to get an approximation algorithm for our problem. But before that let us define the {\it Weighted Maximum Coverage Problem}.\\\\
%\vspace{-0.1in}
{\it Weighted Maximum Coverage Problem} (WMCP): Given an universe $X=\{x_1,x_2,\ldots,x_n\}$, a family $\mathcal{S}$ 
of subsets of $X$, an integer $\tau$ and weight $w_i$ associated with each $x_i\in X$, find $\tau$ subsets from $\mathcal{S}$ such that total weight of the covered elements is maximized.\\

WMCP is known to be $\mathcal{NP}$-hard and there is an $1-\frac{1}{e}$ factor greedy approximation algorithm for it, where $e\approx 2.718$ \cite{Hochbaum}. In each iteration this algorithm chooses a subset, which contains the maximum weighted uncovered elements. Thus we have the following theorem.

\begin{theorem}
\cite{Hochbaum} The greedy algorithm for WMCP achieves an approximation ratio of $1-\frac{1}{e}$.\label{th:hochbaum}
\end{theorem}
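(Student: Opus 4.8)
The plan is to run the standard greedy-versus-optimum potential argument. Fix an optimal collection $\mathcal{O}=\{O_1,\dots,O_\tau\}\subseteq\mathcal{S}$ and let $OPT$ be the total weight of the elements it covers. For $0\le i\le\tau$ let $c_i$ be the total weight of the elements covered after the $i$-th greedy iteration (so $c_0=0$ and the algorithm outputs weight $c_\tau$), and let $w_i=c_i-c_{i-1}$ denote the weight newly covered in iteration $i$. The goal is to show $c_\tau\ge\bigl(1-\tfrac1e\bigr)OPT$.

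The first and main step is to prove the marginal-gain inequality

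\begin{equation}\label{eq:hochkey}
w_{i+1}\ \ge\ \frac{1}{\tau}\bigl(OPT-c_i\bigr),\qquad 0\le i<\tau .
\end{equation}

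After $i$ greedy iterations, consider the elements covered by $\mathcal{O}$ but not yet covered by the greedy solution. Since $\mathcal{O}$ covers weight $OPT$ in total, while the elements covered by \emph{both} $\mathcal{O}$ and greedy have weight at most $c_i$, these residual elements have total weight at least $OPT-c_i$. They are covered collectively by the $\tau$ sets $O_1,\dots,O_\tau$, so by an averaging argument some $O_j$ covers currently-uncovered weight at least $\tfrac1\tau(OPT-c_i)$. Because the greedy rule picks a set maximizing the newly covered weight, $w_{i+1}$ is at least this quantity, which gives $(\ref{eq:hochkey})$.

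Given $(\ref{eq:hochkey})$, the rest is a routine recurrence: $OPT-c_{i+1}=OPT-c_i-w_{i+1}\le\bigl(1-\tfrac1\tau\bigr)(OPT-c_i)$, so by induction on $i$ one obtains $OPT-c_\tau\le\bigl(1-\tfrac1\tau\bigr)^{\tau}OPT$, and then $\bigl(1-\tfrac1\tau\bigr)^{\tau}\le e^{-1}$ yields $c_\tau\ge\bigl(1-\tfrac1e\bigr)OPT$. The only subtle point is the averaging step inside $(\ref{eq:hochkey})$: one must justify that ``$OPT-c_i$'' is a legitimate lower bound for the weight of elements of $\mathcal{O}$ untouched by greedy, which follows from the observation that the weight of elements covered by both solutions cannot exceed $c_i$. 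Since this is a known result of Hochbaum, the sketch above could equally be replaced by the citation; we include it only for completeness.
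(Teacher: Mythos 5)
Your argument is correct: the marginal-gain inequality $w_{i+1}\ge\frac{1}{\tau}(OPT-c_i)$, the averaging justification (the weight of elements covered by both greedy and the optimum is at most $c_i$, so the residual weight within the optimal sets is at least $OPT-c_i$), and the recurrence leading to $(1-\frac{1}{\tau})^{\tau}\le e^{-1}$ constitute the standard analysis of the greedy algorithm for weighted maximum coverage. The paper itself does not reprove this theorem but simply cites \cite{Hochbaum}, and your sketch is essentially the proof given in that reference, so there is nothing to reconcile.
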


\begin{figure}[ht]
\centering
\includegraphics[height=40mm]{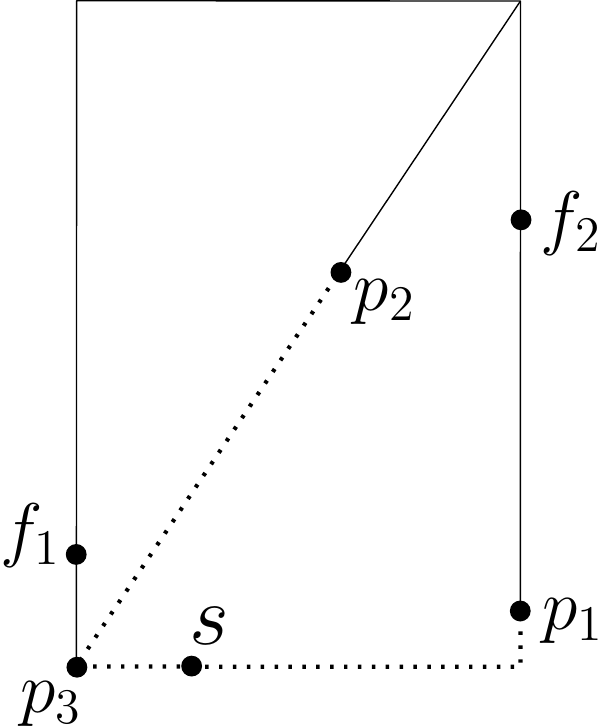}
\caption{Service zone of $s$}
\label{fig:app}
\vspace{-0.1in}
\end{figure}

Let $G$=$(V,E)$ be any graph and $F$ be any set of facilities placed by P1 in $G$. P2 wants to place $k$ new facilities. 
%For the sake of simplicity we assume that the weight of each vertex is zero. But by a simple modification our algorithm can be extended to handle the case when the vertices are having non-zero weights.
%Now recall that $\Gamma$ be the set of points which are at distance $d_i$ from any vertex $v_i$ where $d_i$ be the distance between $v_i$ and the facility closest from $v_i$. 
Now from Theorem \ref{thm:optimality} it follows that there exists an optimal placement by \textit{P2} which is a subset of $\Gamma\cup V$. Now consider any placement of facility at $s\in \Gamma\cup V$ by P2. Let $B_s$ be the set of bisectors corresponding to $s$. For example in Figure \ref{fig:app}, P1 has placed two facilities at $f_1$ and $f_2$ and P2 has placed a facility at $s$. The service zone of P2 is shown with dotted lines. Here the set $B_s$ will be equal to $\{p_1,p_2,p_3\}$. 
%Let $\Omega_1=\cup_{s\in \Gamma\cup V}\Omega_s$. Further let for any two points $s_i$ and $s_j$ in $\Gamma\cup V$, the set of bisectors of any path between $s_i$ and $s_j$ be $\Omega_{ij}$. 
Define $$B= \{\cup_{s\in \Gamma\cup V} B_s\}\cup \Gamma$$ 
It is easy to see that each edge of $G$ can contain at most a constant number of bisectors corresponding to each point of $\Gamma \cup  V$. Thus from Observation $\ref{obs:gamma}$ it follows that the cardinality of $B$ is bounded by $O({(\Gamma\cup V)}E)$= $O((V+E)^2)$. Now we construct a new graph $G'$=$(V',E')$ from $G$, where $V'$=$V\cup B\cup F$. For any edge $e_{ij}\in E$ with end vertices $v_i$ and $v_j$, which does not contain any point of $B$, include $e_{ij}$ in $E'$. Any edge $e_{ij}$ which contains one or more points of $B$, say $\{b_1,b_2,\ldots ,b_l\}$, sorted along $v_i$ to $v_j$, add the edges $(v_i,b_1),(b_1,b_2),\ldots, (b_{l-1},b_l)$ to $E'$. The weight of each such edge is equal to its length. Now observe that service zone of the facility of P2 placed at a point of $\Gamma\cup V$ is a subgraph whose edges are in $E'$ and vertices are in $V\cup B$.

Now consider the set system where $X$ is equal to $E' \cup V$ and for each point $p_i\in \Gamma\cup V$ define the set 
$S_i \subseteq E' \cup V$ such that $S_i$ is the set of edges and non-bisector vertices which are in service zone of the facility of P2 at $p_i$. Now run the greedy algorithm for the {\it Weighted Maximum Coverage Problem} on this set system with $\tau$=$k$. The weight returned by this algorithm is the payoff of P2. Thus we have the following lemma.

\begin{lemma}
Any $\alpha$ factor approximation algorithm for WMCP produces an $\alpha$ factor approximation for the {\it Maximum Payoff Problem}.\label{lemma:alpha}
\end{lemma}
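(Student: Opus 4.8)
The plan is to show that the WMCP instance constructed above and the given instance of the \textit{Maximum Payoff Problem} have the same optimal value, and, more importantly, that any feasible solution of the WMCP instance can be converted in polynomial time into a placement of P2 whose payoff equals the weight covered by that solution (and conversely). Granting this, an $\alpha$-factor approximate solution for WMCP is turned into a P2 placement whose payoff is at least $\alpha\cdot OPT_{WMCP}=\alpha\cdot OPT_{MP}$, which is exactly what the lemma asserts.

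\textbf{Key step: service zones decompose.} First I would prove the central fact that for any placement $S=\{p_{i_1},\ldots,p_{i_k}\}\subseteq \Gamma\cup V$ of P2, the service zone of P2 (viewed as a subgraph of $G'$) is exactly $\bigcup_j S_{i_j}$. The reason is that a point $q$ of $G'$ lies in the service zone of P2 precisely when $d(q,p_{i_j})\le d(q,f)$ for some $j$ and every $f\in F$ (ties being awarded to P2 by the rule of the game), and, since $S_{i_j}$ was defined using the P1 facility closest to each point, this is exactly the condition ``$q\in S_{i_j}$ for some $j$''. Consequently the weight of P2's service zone equals the total weight, over $X=E'\cup V$, of the elements covered by the corresponding sets: each edge of $G'$ has weight equal to its length, which is precisely its contribution to the payoff in the original problem; the bisector vertices of $B$ were assigned weight $0$ and are deliberately excluded from $X$; and the remaining vertices are exactly those of $V$ with their original weights. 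Hence $\mathcal{Q}_2(F,S)$ equals the weight covered by $\{S_{i_1},\ldots,S_{i_k}\}$, and, in particular, a point equidistant from several facilities of P2 is counted once while a point equidistant from a facility of P2 and one of P1 is counted for P2.

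\textbf{Equality of optima and the conversion.} By Theorem \ref{thm:optimality} there is an optimal placement $S^{*}\subseteq\Gamma\cup V$ of P2 with $|S^{*}|=k$; the $k$ sets corresponding to its points cover weight exactly $\mathcal{Q}_2(F,S^{*})$, so $OPT_{WMCP}\ge OPT_{MP}$. Conversely, given any $k$ sets $S_{i_1},\ldots,S_{i_k}$ of the WMCP instance, the points $p_{i_1},\ldots,p_{i_k}$ constitute a valid placement of P2 (if an index collides with a facility of P1 or a set repeats, we pad with arbitrary unused vertices of $V$, which can only increase the payoff by monotonicity), and by the previous paragraph its payoff equals the covered weight; hence $OPT_{MP}\ge OPT_{WMCP}$ and the two optima coincide. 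The map ``$k$ chosen sets $\mapsto$ placement'' is clearly computable in polynomial time. Therefore, running an $\alpha$-factor approximation for WMCP returns sets of covered weight at least $\alpha\cdot OPT_{WMCP}=\alpha\cdot OPT_{MP}$, and the induced placement of P2 has exactly that payoff, giving an $\alpha$-factor approximation for the \textit{Maximum Payoff Problem}.

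\textbf{Main obstacle.} The only delicate point is the decomposition claim of the second step: that P2's service zone under a multi-facility placement coincides with the union of the independently defined sets $S_i$, together with a careful accounting of the tie-breaking rule and of the zero-weight bisector vertices introduced in passing from $G$ to $G'$. Once this bookkeeping is settled, the remainder is a routine translation between the two optimisation problems.
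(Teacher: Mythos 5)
Your proposal is correct and follows essentially the same route as the paper, which leaves the proof implicit in the construction preceding the lemma: the decomposition of P2's service zone as the union of the single-facility zones $S_i$, the use of Theorem \ref{thm:optimality} to restrict to $\Gamma\cup V$ so the two optima coincide, and the trivial polynomial-time translation of chosen sets back into a placement. You merely spell out the bookkeeping (tie-breaking, zero-weight bisector vertices, possible collisions with P1's facilities) that the paper glosses over.
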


Thus by combining Theorem \ref{th:hochbaum} and Lemma \ref{lemma:alpha} we have the following theorem.

\begin{theorem}\label{thm:approx}
There exists an $1-\frac{1}{e}$ factor approximation algorithm for the Maximum Payoff Problem.
\end{theorem}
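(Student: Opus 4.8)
The plan is to establish Theorem~\ref{thm:approx} by exhibiting a polynomial-time, approximation-preserving reduction from the \emph{Maximum Payoff Problem} to WMCP, and then invoking Theorem~\ref{th:hochbaum}. The essential observation, guaranteed by Theorem~\ref{thm:optimality}, is that it suffices to search over placements $S\subseteq \Gamma\cup V$; since $|\Gamma\cup V|=O(|V||E|+m)$ is polynomial, the candidate facility locations form a polynomially-sized ground set. For each candidate point $p_i\in\Gamma\cup V$ we want a subset $S_i$ whose weight equals the marginal contribution structure needed so that the payoff of any $k$-placement $\{p_{i_1},\dots,p_{i_k}\}$ equals the weight of $\bigcup_j S_{i_j}$ in the constructed instance.

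First I would carry out the graph refinement: compute, for every candidate $p_i\in\Gamma\cup V$, all its first-type bisectors $B_{p_i}$ (finitely many, one per path to a P1-facility), collect $B=\bigl(\bigcup_{p_i\in\Gamma\cup V}B_{p_i}\bigr)\cup\Gamma$, and argue $|B|=O((|V|+|E|)^2)$ using Observation~\ref{obs:gamma} and the fact that each edge carries only $O(1)$ bisectors per candidate. Then subdivide $G$ at the points of $B\cup F$ to obtain $G'=(V',E')$ in which every service zone of a facility placed at a point of $\Gamma\cup V$ is exactly a union of whole edges of $E'$ together with whole (non-bisector) vertices of $V$. This discretization is the crux that makes the coverage formulation exact: once bisectors are genuine vertices, no edge is ever split ``in the middle'' by a service zone, so service zones become honest subsets of a finite ground set.

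Second I would set up the WMCP instance: ground set $X=E'\cup V$ (edges of the refined graph together with the original, non-bisector vertices), weights equal to edge lengths and to the original vertex weights respectively, one set $S_i\subseteq X$ per candidate $p_i\in\Gamma\cup V$ defined as the edges and non-bisector vertices lying in the service zone of a P2 facility at $p_i$, and $\tau=k$. The key correctness claim is that for any $k$-subset $\{p_{i_1},\dots,p_{i_k}\}\subseteq\Gamma\cup V$ the total weight of $\bigcup_j S_{i_j}$ equals $\mathcal{Q}_2(F,\{p_{i_1},\dots,p_{i_k}\})$; this holds because on a graph a point is served by P2 iff it lies in some individual P2 service zone (service zones of P2 facilities, together with the tie-breaking convention favoring P2, cover precisely the P2 payoff region), and because the refinement guarantees that region is a union of whole elements of $X$ — so unioning the sets correctly handles overlaps and no weight is double-counted. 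Consequently an $\alpha$-approximate solution to this WMCP instance yields a $k$-placement of P2 whose payoff is at least $\alpha$ times the payoff of the best placement within $\Gamma\cup V$, which by Theorem~\ref{thm:optimality} equals the true optimum; this is Lemma~\ref{lemma:alpha}. Plugging $\alpha=1-\tfrac1e$ from Theorem~\ref{th:hochbaum} gives the theorem.

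The main obstacle I anticipate is the correctness claim rather than the construction: one must verify carefully that after the subdivision the P2 payoff region really is a union of entire elements of $X$ and that the set system records it faithfully — in particular that a bisector vertex of $G'$ never contributes weight (it has weight zero, being an artificial subdivision point) and that no portion of an edge of $G'$ can be split between P1 and P2 (which is exactly why bisectors were added to $V'$). Handling the tie-breaking rule (points equidistant from a P1 and a P2 facility go to P2) and the degenerate points $p\in(f_t,v_j)$ introduced near P1 facilities needs a line of care to confirm they do not break the ``union of whole edges'' property, but these are routine once the refinement is in place.
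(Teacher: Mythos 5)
Your proposal is correct and follows essentially the same route as the paper: restrict candidates to $\Gamma\cup V$ via Theorem~\ref{thm:optimality}, subdivide $G$ at the bisector set $B$ so that each candidate's service zone becomes a union of whole edges of $E'$ and non-bisector vertices, and then apply the greedy $1-\frac{1}{e}$ algorithm for WMCP (Theorem~\ref{th:hochbaum}) through Lemma~\ref{lemma:alpha}. The correctness point you flag — that P2's payoff region under a joint placement is exactly the union of the individually computed zones against $F$, so coverage weight equals payoff — is precisely what the paper's construction relies on.
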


% % % % End of section 6 % % %

% % %%%%%Start of section 7 edited %%%%% % %
% % % % % Section 7 % % % % % % %

\section{Bound on Maximum Payoff of P1 on Trees}\label{sec:tree}
We are given a tree $T$=$(V,E)$. We show a lower bound on maximum payoff of P1 from $T$. Denote the total weight of $T$ by $\mathcal{W}$. Recall the definition of partition of a tree from Section \ref{sec:optontree}. We show that there is a collection of points $P$ in $T$ such that the weight of each subtree in the partition $T(P)$ is at most $\frac{\mathcal{W}}{|P|+1}$. Here we assume that if a subtree in a partition contains a vertex of $P\cap V$, then its weight is changed to zero. 

\begin{lemma}
For any tree $T$ and a positive integer $\tau$ there is a set of points $P=\{p_1,p_2,\ldots,p_\tau\}$ which partitions $T$ into at 
least $\tau+1$ subtrees such that weight of each subtree in $T(P)$ is at most $\frac{\mathcal{W}}{\tau+1}$.\label{lem:tau}
\end{lemma}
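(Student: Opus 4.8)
The plan is to argue by induction on $\tau$, at each step locating a single point $p$ whose removal peels off a subtree of weight at most $\frac{\mathcal{W}}{\tau+1}$ while leaving a remainder that is still ``heavy enough'' to recurse on. To make this precise I would first set up a \emph{weighted centroid}-style argument: root $T$ arbitrarily, and for each point $x$ on $T$ (vertex or interior point of an edge) consider, among the subtrees obtained by deleting $x$, the one of maximum weight, call its weight $\Phi(x)$; recall that a point of $P\cap V$ has its own weight zeroed, so a subtree's weight counts only the vertices/edges strictly below the cut plus the partial edge. The function $\Phi$ is continuous along each edge and decreases as we move from a leaf toward the ``center'', so there is a point where $\Phi$ first drops to at most $\frac{\mathcal{W}}{\tau+1}$. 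I would take $p_1$ to be the first such point encountered while walking in from the heaviest leaf; by continuity (and the fact that moving by an infinitesimal amount changes the split weights continuously) the component that was just ``cut off'' — the part of $T$ lying on the leaf side of $p_1$ — has weight exactly $\frac{\mathcal{W}}{\tau+1}$ or just under it, so it is a legitimate subtree of $T(P)$ for the final $P$.

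Next I would recurse on the remaining piece $T' = T \setminus (\text{the cut-off subtree})$, which is again a tree, rooted at (or hanging from) $p_1$. Its total weight is at least $\mathcal{W} - \frac{\mathcal{W}}{\tau+1} = \frac{\tau}{\tau+1}\mathcal{W}$. The inductive hypothesis, applied to $T'$ with parameter $\tau-1$, yields points $p_2,\dots,p_\tau$ partitioning $T'$ into at least $\tau$ subtrees, each of weight at most $\frac{\text{weight}(T')}{\tau}$. The key numeric check is that $\frac{\text{weight}(T')}{\tau} \le \frac{\mathcal{W}}{\tau+1}$: since $\text{weight}(T') \le \mathcal{W} - \frac{\mathcal{W}}{\tau+1} = \frac{\tau\mathcal{W}}{\tau+1}$, we indeed get $\frac{\text{weight}(T')}{\tau} \le \frac{\mathcal{W}}{\tau+1}$, so every subtree produced by the recursion also respects the global bound $\frac{\mathcal{W}}{\tau+1}$. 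Combining the cut-off subtree from the first step with the $\ge \tau$ subtrees from the recursion gives $\ge \tau+1$ subtrees total, all of weight $\le \frac{\mathcal{W}}{\tau+1}$, which is exactly the claim. The base case $\tau = 0$ is vacuous (the whole tree is one subtree of weight $\mathcal{W} \le \mathcal{W}$), and $\tau = 1$ is the single-cut statement handled by the continuity argument above.

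I expect the main obstacle to be the first step — showing that a single cut point $p_1$ can always be chosen so that the heavier side is at most $\frac{\mathcal{W}}{\tau+1}$ \emph{and} the lighter side keeps enough weight. One has to be careful that $\Phi$ really is continuous across vertices (when sliding a cut point through a branch vertex, one max-component can jump to a different branch, but the maximum weight itself still varies continuously because it is a max of finitely many continuous functions), and that the ``first time $\Phi \le \frac{\mathcal{W}}{\tau+1}$'' point, when approached from the heaviest leaf, cuts off precisely the leaf-side mass that was causing $\Phi$ to be large — so that that mass is $\le \frac{\mathcal{W}}{\tau+1}$. A clean way to finesse this is: start at the heaviest leaf $\ell$ (where the leaf-side component has weight $0$) and move toward the centroid; track $\psi(x) = $ weight of the component containing $\ell$ after deleting $x$; $\psi$ increases continuously from $0$; stop at the first $x$ with $\psi(x) = \frac{\mathcal{W}}{\tau+1}$, or if $\psi$ never reaches that value, at the centroid itself (where by definition every side has weight $\le \frac{\mathcal{W}}{2} \le \frac{\mathcal{W}}{\tau+1}$ when $\tau = 1$, and for larger $\tau$ one argues the heaviest leaf-side component must have been handled earlier). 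The remaining minor bookkeeping — that after re-rooting $T'$ at $p_1$ the recursion's subtrees and $T(P)$'s subtrees coincide, and that points placed at vertices have weight zeroed consistently in both $T$ and $T'$ — is routine and I would only sketch it.
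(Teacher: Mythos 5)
Your overall plan---make one cut, then recurse on the heavy remainder using the check $\frac{\mathrm{weight}(T')}{\tau}\le\frac{\mathcal{W}}{\tau+1}$---is the same strategy as the paper's, but your execution of the single-cut step has a genuine gap. The function $\psi(x)$ (weight of the component containing the chosen leaf $\ell$ after deleting $x$) is \emph{not} continuous: it varies continuously only in the interior of an edge, and when the cut point crosses a branch vertex $v$ it jumps, absorbing $w(v)$ and all the branches at $v$ other than the one you walk into. So there need not exist any $x$ with $\psi(x)=\frac{\mathcal{W}}{\tau+1}$: take, say, a leaf $\ell$ joined by a light path to a vertex $v$ carrying two branches of weight about $\frac{\mathcal{W}}{2}$ each, with $\tau=2$; walking from $\ell$, $\psi$ jumps from nearly $0$ directly to about $\frac{\mathcal{W}}{2}>\frac{\mathcal{W}}{3}$. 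Your fallback (``stop at the centroid'') does not repair this, since $\frac{\mathcal{W}}{2}\le\frac{\mathcal{W}}{\tau+1}$ fails for every $\tau\ge 2$, and in this example the correct first cut lies \emph{inside} one of the heavy branches, not on the $\ell$-to-centroid walk at all.

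The second, related problem is the bookkeeping that feeds the induction. You assert both that the cut-off piece has weight ``exactly $\frac{\mathcal{W}}{\tau+1}$ or just under it'' and that $\mathrm{weight}(T')\le\frac{\tau\mathcal{W}}{\tau+1}$; the latter requires the first cut to remove at least $\frac{\mathcal{W}}{\tau+1}$ of the total weight, which ``just under'' does not give, and which cannot in general be achieved by peeling off a \emph{single} subtree of exactly threshold weight (vertex weights are atoms, and $\psi$ jumps at branch vertices). The paper avoids both issues with a discrete argument: root $T$, define the extended downward weight $w_T$, and descend to a vertex $\breve{v}$ with $w_T(\breve{v})\ge\frac{\mathcal{W}}{\tau+1}$ all of whose children have extended weight $<\frac{\mathcal{W}}{\tau+1}$. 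Either cutting at $\breve{v}$ sheds \emph{several} child branches at once, each of weight $<\frac{\mathcal{W}}{\tau+1}$ but together (with the zeroed weight of $\breve{v}$) at least $\frac{\mathcal{W}}{\tau+1}$, so the upper remainder has weight $\le\frac{\tau\mathcal{W}}{\tau+1}$; or some child branch plus its edge overshoots while the child subtree alone undershoots, and then a cut in the interior of that one edge---where the weight really does vary continuously---gives an exact $\frac{\mathcal{W}}{\tau+1}$ versus $\frac{\tau\mathcal{W}}{\tau+1}$ split. If you replace your leaf-walk/intermediate-value step by this descent (or prove an analogous ``heavy vertex with light children'' claim), the rest of your induction, including the numeric check, goes through.
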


\begin{proof}
Observe that it is enough to show that for any weighted tree $T$=$(V,E)$ with weight function $w$ and a positive integer $\tau$ there is a point $\mathring{p}$ which partitions the tree into two or more parts so that the weight of one part is at most $\frac{\tau \mathcal{W}}{\tau+1}$ and the weight of every other part is at most $\frac{\mathcal{W}}{\tau+1}$. 

Choose an arbitrary vertex of tree as the root of $T$. Define an extended weight function $w_T: V \rightarrow \mathbb{R}^+ \cup \{0\}$ such that for a vertex $v_i$,\\  
\[ w_T(v_i) = \left\{ 
  \begin{array}{l l}
    w(v_i) &  \textrm{ if $v_i$ is a leaf}\\
    w(v_i) + \sum_{j:\text {$v_j$ is a child of $v_i$}}(w_T(v_j)+w(v_i,v_j)) & \text{ otherwise}
  \end{array} \right.\]

% weight of any leaf node $v$ is its original weight $w_v$ and weight of any internal node $v_i$ is equal 
% to $\sum_{v_j}(w_{v_i}+w_T(v_j)+w(v_i,v_j))$, where $v_j$ is child of $v_i$ and $w(v_i,v_j)$ be the weight of the 
% edge $(v_i,v_j)$. 

Now observe that there will always be a vertex with extended weight greater than or equal to $\frac{\mathcal{W}}{\tau+1}$ 
and all of its children are having extended weight less than $\frac{\mathcal{W}}{\tau+1}$. Denote that vertex by 
$\breve{v}$. Let the children of $\breve{v}$ be $\{v_1,v_2,\ldots,v_l\}$. Now if for all $1\leq i\leq l$, 
$w_T(v_i)+w(\breve{v},v_i)$ is less than $\frac{\mathcal{W}}{\tau+1}$, then $\mathring{p}=\breve{v}$. 
Otherwise there exists a child $v_j$ of $\breve{v}$ such that $w_T(v_j)+w(\breve{v},v_j)>\frac{\mathcal{W}}{\tau+1}$, 
and $w_T(v_j)<\frac{\mathcal{W}}{\tau+1}$. However, in that case there exists a point $p$ 
on the edge $(\breve{v},v_j)$, which partitions the tree into two parts, one having weight $\frac{\mathcal{W}}{\tau+1}$ 
and the other having weight $\frac{\tau\mathcal{W}}{\tau+1}$. Thus $\mathring{p}=p$ and the result follows.

\end{proof}
The next corollary follows from Lemma \ref{lem:tau}.

\begin{cor}
There exists a placement strategy of P1 such that it always achieves at least $\frac{m-k+1}{m+1}\mathcal{W}$ 
as its payoff for {\it One-Round (m,k) Voronoi Game on} $T$.\label{cor:boundonpayoff}%\vspace{-0.1in}
\end{cor}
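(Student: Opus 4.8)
The plan is to apply Lemma~\ref{lem:tau} with the parameter $\tau = m$ to obtain a set $P = \{p_1,\ldots,p_m\}$ that partitions $T$ into at least $m+1$ subtrees, each of weight at most $\frac{\mathcal W}{m+1}$. P1 places its $m$ facilities exactly at the points of $P$. The intuition is that these $m$ facilities carve $T$ into small pieces, and P2's $k$ facilities can only ``steal'' whole pieces (plus possibly partial pieces adjacent to where P2 places its facilities), so the total loss of P1 is controllable.

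First I would make precise how the payoff of P1 is affected once P2 responds with any $k$ facilities $S$. Since the points of $P$ act as cut vertices, every subtree $T_i$ in the partition $T(P)$ is separated from the rest of $T$ by facilities of P1; hence a facility of P2 placed in $T_i$ cannot serve any point outside $T_i$ (this is exactly the independence observation already used for trees in Section~\ref{sec:optontree}). Consequently, among the $\geq m+1$ subtrees, at most $k$ of them can contain a facility of P2, and any subtree $T_i$ \emph{not} containing a facility of P2 is served entirely by P1's facility (or facilities) on its boundary. Therefore P1 loses payoff only on those subtrees that do contain at least one facility of P2, and on each such subtree it loses at most its total weight, which is at most $\frac{\mathcal W}{m+1}$ by Lemma~\ref{lem:tau}. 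Summing over the at most $k$ such subtrees, P1's total loss is at most $\frac{k\,\mathcal W}{m+1}$, so its payoff is at least $\mathcal W - \frac{k\,\mathcal W}{m+1} = \frac{m-k+1}{m+1}\,\mathcal W$, which is the claimed bound.

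The main technical point to handle carefully is the boundary behaviour at the cut points: a point $p_j \in P$ that P1 occupies could lie in the interior of an edge, so the two arcs emanating from $p_j$ into two different subtrees each have one endpoint occupied by P1; by the tie-breaking convention and the fact that $p_j$ itself is occupied by P1, the arc adjacent to $p_j$ on the P1 side stays with P1 unless P2 places a facility in that very subtree. Also, the convention in the statement of Lemma~\ref{lem:tau} that a subtree containing a vertex of $P\cap V$ has its weight zeroed needs to be reconciled with the edge weights, so one should argue that the weight of each closed piece (including its boundary arcs) that P1 is guaranteed to keep is counted correctly and that no weight is double-counted or lost at the cuts. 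I expect this bookkeeping at the cut points to be the only real obstacle; the counting argument itself ($\le k$ bad subtrees, each of weight $\le \frac{\mathcal W}{m+1}$) is immediate once Lemma~\ref{lem:tau} is in hand.
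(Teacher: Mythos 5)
Your proposal is correct and follows essentially the same route as the paper: apply Lemma~\ref{lem:tau} with $\tau=m$, let P1 occupy the $m$ cut points, observe that P2's $k$ facilities can gain payoff only inside the at most $k$ subtrees of the partition that contain them, each of weight at most $\frac{\mathcal{W}}{m+1}$, giving P1 at least $\frac{m-k+1}{m+1}\mathcal{W}$. Your extra care about the tie-breaking and bookkeeping at the cut points is a sound (and slightly more explicit) elaboration of what the paper leaves implicit, not a different argument.
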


\begin{proof}
We prove this corollary by proposing a placement strategy of P1. By Lemma $\ref{lem:tau}$ we know that there
exists a set $F'$ which partition the tree $T$ in a manner such that each of the partition is having
weight at most $\frac{\mathcal{W}}{m+1}$, where $|F'|$=$m$. Suppose P1 places its facilities on the points
of $F'$. By placing $k$ facilities P2 can occupy at most $k$ partitions. Payoff of P2 in that case would be at most $\frac{\mathcal{W}}{m+1}k$ . Hence the payoff of P1 is at least $\frac{m-k+1}{m+1}\mathcal{W}$, which completes
the proof of this corollary.
\end{proof}

Now consider a restricted version of this game where P2 places only one facility. Also consider 
the complete bipartite graph $K_{1,m}$ with $m$ edges of equal weight. In this case, an optimal 
strategy of P1 is to place a facility at the \textit{central vertex} (i.e., at the vertex with degree $m$) and the remaining $m-1$ facilities anywhere on the graph. On the other hand, P2 chooses a point as close as possible to the \textit{central vertex} as its optimal strategy. Thus service zone of P2 is limited within an edge and payoff of P1 is $\frac{m}{m+1}\mathcal{W}$. So, the 
bound of Corollary $\ref{cor:boundonpayoff}$ is tight for $k=1$. 
% However, in case of $k>1$ we believe that this bound is not tight for any tree.

% % % % End of section 7 % % %

% % %%%%%Start of section 8 edited %%%%% % %
% % % % % Section 8 % % % % % % %

\section{Conclusion}
Considering the optimal facility location problem for P1 we have shown a lower bound on the maximum payoff. But the status of this problem is still unresolved for general graphs. Also it is not known that whether this problem could be solved in polynomial time on trees. 

We have shown that the {\it Maximum Payoff Problem} is strongly $\mathcal{NP}$-complete and subsequently designed an $1-\frac{1}{e}$ factor approximation algorithm. But no tight lower bound is known on the maximum payoff of P2. We have designed a polynomial time algorithm for the {\it Maximum Payoff Problem} on tree. However, the time complexity of this algorithm is very high and thus one might be interested to reduce it. On the other hand, it would be interesting to study the nature of this problem for some special classes of trees.
% \\\\
% \textit{Acknowledgements}\hspace{2mm} The authors would like to thank the anonymous reviewers for their valuable comments and suggestions which have helped to improve the presentation.

% % % % End of section  % % %

\bibliographystyle{plain}
\bibliography{voronoi_game_ref}

\begin{thebibliography}{10}

\bibitem{Ahn_et_al}
Hee-Kap Ahn, Siu-Wing Cheng, Otfried Cheong, Mordecai~J. Golin, and Ren{\'e}
  van Oostrum.
\newblock Competitive facility location: the voronoi game.
\newblock {\em Theor. Comput. Sci.}, 310(1-3):457--467, 2004.

\bibitem{CFL1}
Simon~P. Anderson.
\newblock Equilibrium existence in the linear model of spatial competition.
\newblock {\em Economica}, 55(220):479--491, 1988.

\bibitem{BandyapadhyayBDS13}
Sayan Bandyapadhyay, Aritra Banik, Sandip Das, and Hirak Sarkar.
\newblock Voronoi game on graphs.
\newblock In {\em WALCOM}, pages 77--88, 2013.

\bibitem{ora1}
A.~Charnes and W.~W. Cooper.
\newblock The theory of search optimal distribution of effort.
\newblock {\em Management Science 5}, pages 44--49, 1958.

\bibitem{Cheong_One-Round_Voronoi_Game}
Otfried Cheong, Sariel Har-Peled, Nathan Linial, and Jir\'{\i} Matousek.
\newblock The one-round voronoi game.
\newblock {\em Discrete {\&} Computational Geometry}, 31(1):125--138, 2004.

\bibitem{ora2}
J.~De~Guenni.
\newblock Optimum distribution of effort: An extension of the koopman basic
  theory.
\newblock {\em J ORSA 9}, pages 1--7, 1961.

\bibitem{nash_DurrT07}
Christoph D{\"u}rr and Nguyen~Kim Thang.
\newblock Nash equilibria in voronoi games on graphs.
\newblock In {\em ESA}, pages 17--28, 2007.

\bibitem{CFL4}
H.A. Eiselt and G.~Laporte.
\newblock Competitive spatial models.
\newblock {\em European J. Oper. Res.}, 39:231--242, 1989.

\bibitem{CFL_survey2}
H.A. Eiselt, G.~Laporte, and J.F. Thisse.
\newblock Competitive location models: a framework and bibliography.
\newblock {\em Transportation Sci.}, 27:44--54, 1993.

\bibitem{Fekete_one-round_Voronoi_game}
S{\'a}ndor~P. Fekete and Henk Meijer.
\newblock The one-round voronoi game replayed.
\newblock {\em Comput. Geom.}, 30(2):81--94, 2005.

\bibitem{nash_transitive_FeldmannMM09}
Rainer Feldmann, Marios Mavronicolas, and Burkhard Monien.
\newblock Nash equilibria for voronoi games on transitive graphs.
\newblock In {\em WINE}, pages 280--291, 2009.

\bibitem{gareynjohnson}
M.~R. Garey and David~S. Johnson.
\newblock {\em Computers and Intractability: A Guide to the Theory of
  NP-Completeness}.
\newblock W. H. Freeman, 1979.

\bibitem{CFL6_Hakimi}
S.L. Hakimi.
\newblock On locating new facilities in a competitive environment.
\newblock {\em European J. Oper. Res.}, 12:29--35, 1983.

\bibitem{CFL3_Hakimi}
S.L. Hakimi.
\newblock Location with spatial interactions: competitive location and games,
  in: R.l. francis, p.b. mirchandani (eds.).
\newblock {\em Discrete Location Theory}, pages 439--478, 1990.

\bibitem{Hochbaum}
Dorit~S. Hochbaum.
\newblock {\em Approximation algorithms for $\mathcal{NP}$-Hard problems}.
\newblock PWS publishing company, 1996.

\bibitem{karush}
William Karush.
\newblock A general algorithm for the optimal distribution of effort.
\newblock {\em Management Science}, 9(1):pp. 50--72, 1962.

\bibitem{Vgameonpath}
Masashi Kiyomi, Toshiki Saitoh, and Ryuhei Uehara.
\newblock Voronoi game on a path.
\newblock {\em IEICE Transactions}, 94-D(6):1185--1189, 2011.

\bibitem{karush53}
Bernard~O. Koopman.
\newblock The optimum distribution of effort.
\newblock {\em Journal of the Operations Research Society of America}, 1(2):pp.
  52--63, 1953.

\bibitem{CFL5_Hakimi}
M.~Labb{\'e} and S.L. Hakimi.
\newblock Market and locational equilibrium for two competitors.
\newblock {\em Oper. Res.}, 39:749--756, 1991.

\bibitem{nash_cycle_MavronicolasMPS08}
Marios Mavronicolas, Burkhard Monien, Vicky~G. Papadopoulou, and Florian
  Schoppmann.
\newblock Voronoi games on cycle graphs.
\newblock In {\em MFCS}, pages 503--514, 2008.

\bibitem{Hakimi_et_al}
Nimrod Megiddo, Eitan Zemel, and S~Louis Hakimi.
\newblock The maximum coverage location problem.
\newblock {\em SIAM Journal on Algebraic Discrete Methods}, 4(2):253--261,
  1983.

\bibitem{CFL2}
A.~Okabe and M.~Aoyagy.
\newblock Existence of equilibrium configurations of competitive firms on an
  infinite two-dimensional space.
\newblock {\em J. Urban Econom.}, 29:349--370, 1991.

\bibitem{CFL_duopoly}
Daniela Saban and Nicolas Stier-Moses.
\newblock The competitive facility location problem in a duopoly: Connections
  to the 1-median problem.
\newblock In {\em Proceedings of the 8th International Conference on Internet
  and Network Economics}, WINE'12, pages 539--545, Berlin, Heidelberg, 2012.
  Springer-Verlag.

\bibitem{CFL_tree1}
Shogo Shiode and Zvi Drezner.
\newblock A competitive facility location problem on a tree network with
  stochastic weights.
\newblock {\em European Journal of Operational Research}, 149(1):47--52, 2003.

\bibitem{Voronoi_game_on_graphs}
Sachio Teramoto, Erik~D. Demaine, and Ryuhei Uehara.
\newblock The voronoi game on graphs and its complexity.
\newblock {\em J. Graph Algorithms Appl.}, 15(4):485--501, 2011.

\bibitem{CFL_survey1}
R.L. Tobin, T.L. Friesz, and T.~Miller.
\newblock Existence theory for spatially competitive network facility location
  models.
\newblock {\em Ann. Oper. Res.}, 18:267--276, 1989.

\end{thebibliography}

\end{document}